\documentclass[format=acmsmall]{acmart}
\usepackage{booktabs} 
\usepackage[linesnumbered,lined,boxed,commentsnumbered]{algorithm2e}

\SetAlFnt{\small}
\SetAlCapFnt{\small}
\SetAlCapNameFnt{\small}
\SetAlCapHSkip{0pt}
\IncMargin{-\parindent}

\usepackage{amssymb}
\usepackage{amsmath}
\usepackage{graphicx}
\usepackage{subcaption}
\usepackage{multirow}
\usepackage[english]{babel}

\graphicspath{{./}{./Image/}}

\usepackage{appendix}

\begin{document}
\date{}
\title{Exploring the Scope of Unconstrained Via Minimization by Recursive Floorplan Bipartitioning}
\author{Bapi Kar}
\affiliation{%
  \institution{Indian Institute of Technology \footnote{currently at Nanyang Technological University, Singapore}}
  \city{Kharagpur}
  \country{India}}
  
\author{Susmita Sur-Kolay}
\affiliation{
  \institution{Indian Statistical Institute}
  \city{Kolkata}
  \country{India}}

\author{Chittaranjan Mandal}
\affiliation{%
  \institution{Indian Institute of Technology}
  \city{Kharagpur}
  \country{India}}

\begin{abstract}
Random via failure is a major concern for post-fabrication reliability and poor manufacturing yield. A demanding solution to this problem is redundant via insertion during post-routing optimization. It becomes very critical when a multi-layer routing solution already incurs a large number of vias. Very few global routers addressed unconstrained via minimization (UVM) problem, while using minimal pattern routing and layer assignment of nets. It also includes a recent floorplan based early global routability assessment tool STAIRoute \cite{karb2}.  

This work addresses an early version of unconstrained via minimization problem during early global routing by identifying a set of minimal bend routing regions in any floorplan, by a new recursive bipartitioning framework. These regions facilitate monotone pattern routing of a set of nets in the floorplan by STAIRoute. The area/number balanced floorplan bipartitionining is a multi-objective optimization problem and known to be NP-hard \cite{majum2}. No existing approaches considered bend minimization as an objective and some of them incurred higher runtime overhead. In this paper, we present a Greedy as well as randomized neighbor search based staircase wave-front propagation methods for obtaining optimal bipartitioning results for minimal bend routing through multiple routing layers, for a balanced trade-off between routability, wirelength and congestion.

Experiments were conducted on MCNC/GSRC floorplanning benchmarks for studying the variation of early via count obtained by STAIRoute for different values of the trade-off parameters ($\gamma, \beta$) in this multi-objective optimization problem, using $8$ metal layers. We studied the impact of ($\gamma, \beta$) values on each of the objectives as well as their linear combination function $Gain$ of these objectives. 
\end{abstract}

\keywords{
Recursive floorplan bipartitioning, minimal bend monotone staircase routing regions, randomized neighbor search, staircase wave-front propagation, unconstrained via minimization, early global routing.}

\maketitle

\section{Introduction}
With sustained advancement in IC fabrication technology, stringent design rules are evolving by exponentially large numbers. Straightforward routing solutions from a HPWL aware placement solution may not yield an acceptable physical design closure due to too many routing violations in subsequent global routing. If these violations are not resolved by the subsequent detailed routing or by an iterative global and detailed routing, the placement (or floorplanning or even logic restructuring) of the design should be redone. In practice, several iterations in block placement (and floorplanning) are required for complex designs containing multi-million gates in order to attain a feasible routing solution (see Fig. \ref{fig:pdflow} (a)). Therefore, it has been a mandate to consider different global routing metrics such as routability, wirelength, congestion \cite{chang,liuw,tlin} and even timing \cite{sherw} as the prime objectives in a placement problem. Some placement engines, however, integrated faster global routing solutions for iterative improvement of the placement solution \cite{hex,panm1,panm2,viswa,viswa2}. For fewer design iterations before successful routing closure, integrated global/detailed routing methods were also explored \cite{zhang}.
\begin{figure}[!ht]
\centering
\begin{subfigure}[b]{0.48\textwidth}
\centering
\includegraphics[scale=0.65]{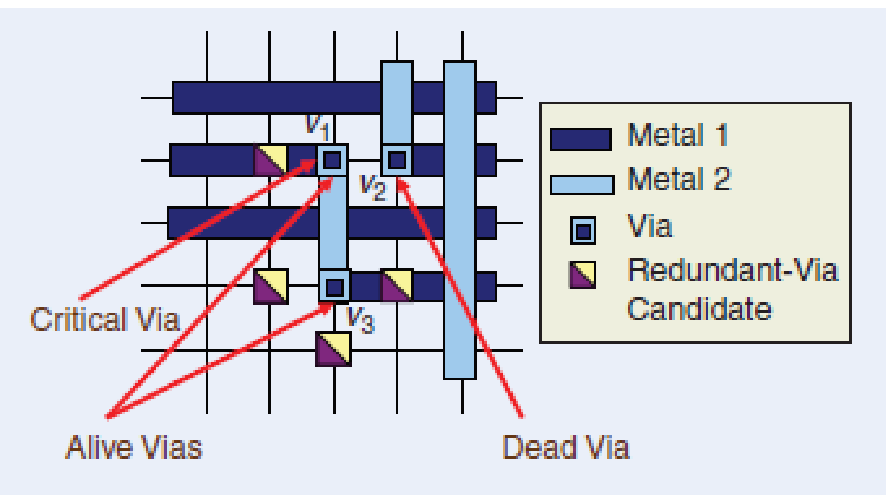}
\caption{}
\end{subfigure}
\begin{subfigure}[b]{0.48\textwidth}
\centering
\includegraphics[scale=0.75]{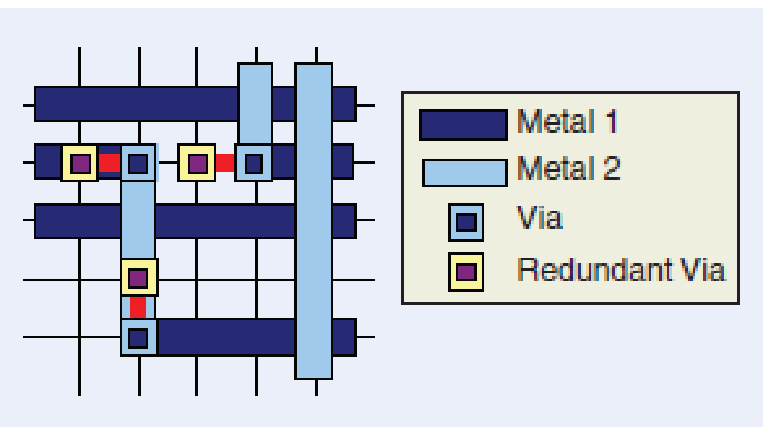}
\caption{}
\end{subfigure}
\begin{subfigure}[b]{\textwidth}
\centering
\includegraphics[scale=0.65]{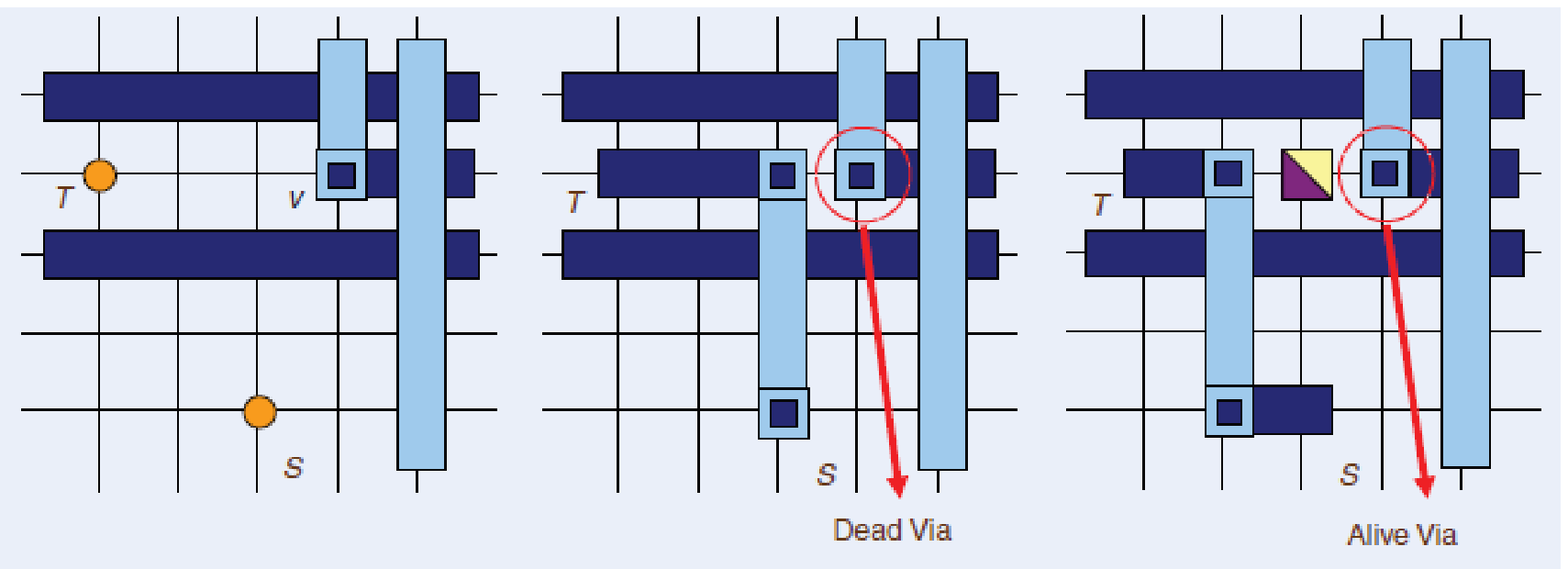}
\caption{}
\end{subfigure}
\caption{Enahancing design for reliability using reduncant-via aware routing \cite{hchen2}}
\label{fig:redun_via}
\end{figure}

Modern VDSM fabrication processes, such as $65nm$ and below, continue to allow more routing layers with varying metal width/pitch for successful routing completion. A routing solution with excessive via count not only causes design for reliability issues due to random via failures \cite{hchen2}, but also impacts the circuit performance due to increased resistance along the routing paths with more vias. Double via insertion during post-routing layout optimization or identifying a via-failure aware routing \cite{hchen2} as depicted in Fig. \ref{fig:redun_via} or even redundant via aware ECO routing during mask optimization \cite{chien} for increased reliability and yield of the fabricated design are some of the known approaches to minimize these failures. Moreover, vias consume substantial routing area and pose as additional routing blockages in the routing regions impacting routability of the design. Therefore, via minimization \cite{sherw} is a critical problem to handle in physical design flow. There are two approaches: (a) unconstrained via minimization (UVM), and (b) constrained via minimization (CVM). While UVM identifies a routing path of a net with minimal number of vias along it for a given number of routing (metal) layers, CVM approaches aims to minimize the number of vias while keeping the routing topology unchanged. This routing topololy is obatined by planar routing solution during early phases of global routing. Although, both are known to be NP-hard problems, UVM is much harder than CVM \cite{sherw,hsu}. Existing global routers \cite{mcho1,panm1,royj,xuy}, except a few like \cite{zcao,luj,msad,zhang2}, used CVM based layer assignment approaches on a planar routing solution for reducing via count as well as mitigating congestion \cite{lee}.
\begin{figure}[!ht]
\centering
\includegraphics[scale=0.45]{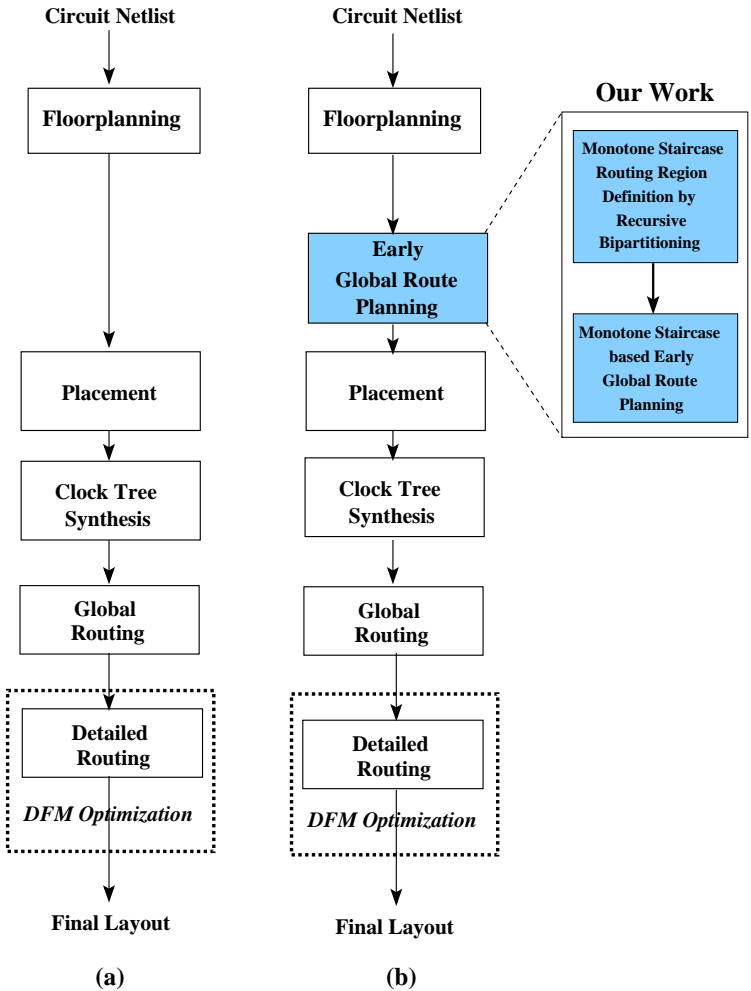}
\caption{Physical Design (PD) Flow: (a) Traditional \cite{sherw,olympus}, and (b) New \cite{karb2}}
\label{fig:pdflow}
\end{figure}

Recently, an early global routing (EGR) method STAIRoute \cite{karb2} was proposed for early routability assessment of a floorplanned layout, facilitated by a monotone staircase cut based recursive floorplan bipartitioning framework \cite{karb,majum1,majum2}. These bipartitioners work on any floorplan irrespective of their sliceability. As \cite{guru,ssk} pointed out, the monotone staircase routing framework ensures an well defined routing order of the nets, based on the net cut information available with the nodes in the bipartitioning hierachy \cite{karb,majum1,majum2}. As highlighted in Fig. \ref{fig:pdflow} (b), STAIRoute works in two stages: (a) enumerating the monotone staircase routing regions in a floorplan by recursive bipartitioning using monotone staircase cuts \cite{karb,majum1,majum2}, and (b) proposing an early global routing model for routing these nets through a number of metal layers, using these bipartitioning results. 

The existing bipartitioning methods using monotone staircase cuts \cite{dasg,karb,majum1,majum2} considered only two objectives: (i) the area (number) of the blocks in each bipartition to be maximized, and (ii) the number of nets cut by a bipartition be minimized. While the former objective is related to the height of the bipartition hierarchy (also known as MSC tree \cite{karb}), minimizing the number of nets being cut has several advantages like: (a) distributing the routing paths of the nets uniformly across the entire layout, (b) reducing routing violations due to congestion hot-spots (congestion $> 100\%$), (c) achieving uniform wire distribution across the layout for minimal variation due to \textit{chemical mechanical polishing} (CMP) process, and (d) minimizing cross-talk effect due to long (global) nets running through the longer staircases, specially those nets corresponding to the upper nodes in the bipartition tree. In global routing, routing path of a net using multi-bend monotone pattern routing and its variants L/Z patterns \cite{zcao,kast} is confined within the net bounding boxes. Therefore, identification of a minimal bend monotone patterns can potentially yield fewer via counts, while L/Z patterns use minimum of one/two vias respectively for minimum layer change.

In this work, we propose a new recursive floorplan bipartitioning framework, for identifying minimal bend monotone staircase routing regions in a floorplan, in order to use fewer vias during early global routing of the nets in the floorplan. The key contributions of this paper are:
\begin{enumerate}
\item define a new objective of bend minimization in the existing multi-objective floorplan bipartitioning problem;
\item propose a greedy method for identifying minimal bend monotone staircase routing regions for early global routing with smaller via count (this is an early approach for \textit{unconstrained via minimization} (UVM); and 
\item introduce a randomized neighbor search technique and staircase wave front propagation approach for exploring a larger solution space of potentially optimal minimal bend monotone staircase regions in a floorplan.
\end{enumerate}

The organization of this paper is as follows: in Section \ref{sec:prelim}, we discuss the background on monotone staircase routing region definition in a floorplan. The proposed floorplan bipartitioning method, for identifying a set of monotone staircases with minimal number of bends for the entire floorplan, is presented in Section \ref{sec:bend}. Section \ref{sec:msc-rand} discusses the basis for an extension of this greedy bipartitioning method and illustrates a new randomized neighbor search technique and the corresponding staircase wave-front propagation approach. Experimental results and relevant discussions are covered in Section \ref{sec:result}, followed by the summary of this work in Section \ref{sec:con}.

\section{Background on Monotone Staircase Cuts}
\label{sec:prelim}
Before discussing the proposed recursive floorplan bipartitioning method, we revisit the formulation of an unweighted directed graph $G_b(V_b,E_b)$, namely \textit{block adjacency graph} (BAG) \cite{karb,majum1}, used to define the adjacency relation of a set of $n$ blocks $B = \{b_i\}$ in a given floorplan $F$. The graph $G_b(V_b,E_b)$ is defined as follows: the vertex set $V_b$ = \{$v_i | v_i$ corresponds to block $b_i$\} and the edge set $E_b = \{e_{ij}\}$ where $e_{ij}$ = \{($v_i$,$v_j$) $|$ block $b_i$ is on the \textit{left of (above)} an adjacent block $b_j$ in $F$\}. The vertices corresponding to the top-left and the bottom-right corner blocks are designated as the source and the sink vertices respectively, with zero in-degree and out-degree respectively. This definition yields a \textit{monotonically increasing staircase} (MIS) $C_I$ (see Fig. \ref{fig:bag} (a)).
\begin{figure}[!ht]
\centering
\includegraphics[scale=0.33]{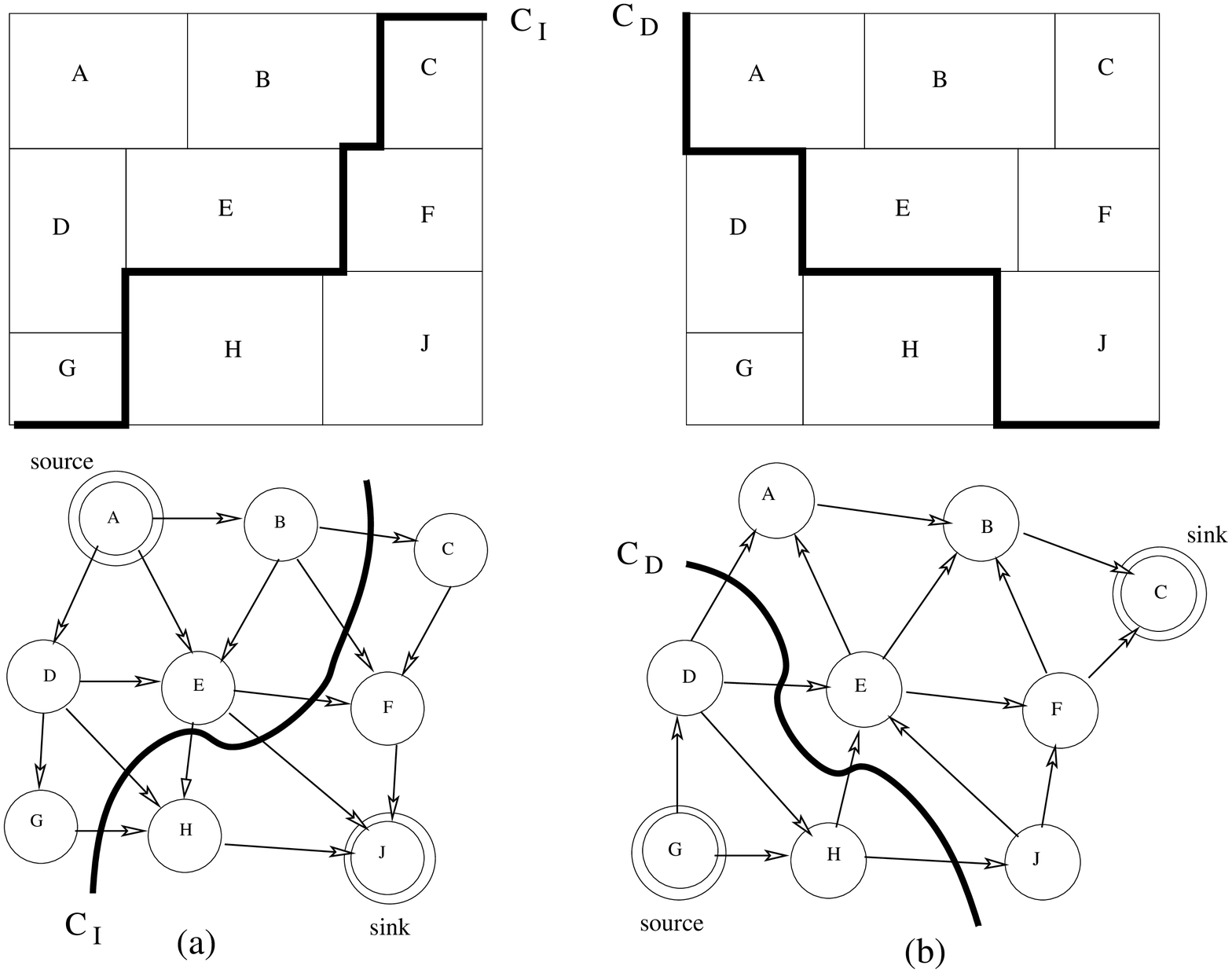}
\caption{A floorplan and the corresponding BAGs for a (a) MIS and (b) MDS cut \cite{karb}}
\label{fig:bag}
\end{figure}

The definition of BAG for obtaining a \textit{monotonically decreasing staircase} (MDS) is as follows: edge $e_{ij}$ =\{($v_i,v_j$) for a pair of adjacent blocks ($b_i, b_j$) such that $b_i$ is to the \textit{left of (below)} $b_j$\}. The source and sink vertices are identified as the vertices pertaining to the bottom-left and top-right corner blocks respectively. This scenario is captured in Fig. \ref{fig:bag} (b) along with the MDS cut $C_D$. In the rest of the paper we refer an MIS/MDS cut as a \textit{ms-cut} unless stated explicitly. It is to be noted that, unlike in \cite{majum1,majum2}, this graph based framework does not consider any netlist information while constructing BAG for faster bipartitioning results. The netlist information is solely used to identify the cut nets and the uncut nets that fall on either side of the bipartition. These uncut nets and the respective parts of the cut nets with atleast two pins in each part. In this method, net cut information in each level of the bipartition hierarchy is a measure of the optimality of each ms-cut obtained, and is referred to as min-cut balanced floorplan bipartitioning \cite{majum1,majum2}.

In order to ensure each cut in BAG is an ms-cut, we refer to the following lemma given in \cite{majum1}, commonly known as \textit{monotone staircase property}.
\begin{lemma} 
\label{lem:1}
If $e_{ij} \in E_b$ is an arc in $G_b$, then there exists at least one monotone staircase in the floorplan such that the blocks $b_i$ and $b_j$ appear in the left and right partitions respectively, and there exists no staircase with $b_i$ in the right partition and $b_j$ in the left partition.
\end{lemma}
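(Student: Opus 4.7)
The plan is to exploit a duality between monotone increasing staircases in the floorplan $F$ and directed $s$-$t$ cuts in the DAG $G_b$. Recall that the BAG for MIS has a unique source (top-left corner block) and a unique sink (bottom-right corner block), and is acyclic because the ``left-of/above'' relation is acyclic. Implicit in the BAG construction of \cite{karb,majum1} is the following characterization: a bipartition $(L,R)$ of $V_b$ corresponds to a valid MIS of $F$ if and only if (i) source $\in L$ and sink $\in R$, and (ii) no arc of $E_b$ crosses backwards from $R$ to $L$. Once this correspondence is in hand, the lemma reduces to elementary DAG reasoning; the MDS case follows by symmetry after swapping source/sink.

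For the existence half, I would define $L = \{v \in V_b : \text{there is a directed path from } v \text{ to } b_i \text{ in } G_b\} \cup \{b_i\}$ and $R = V_b \setminus L$. By construction $b_i \in L$. If $b_j$ were also in $L$ then there would be a directed path $b_j \to \cdots \to b_i$, which together with the arc $e_{ij} = (b_i,b_j)$ would form a directed cycle, contradicting acyclicity of $G_b$; hence $b_j \in R$. The source lies in $L$ because it reaches every vertex, in particular $b_i$, and the sink lies in $R$ because it has out-degree zero (the only way sink $\in L$ would be sink $= b_i$, which is impossible since $b_i$ has out-neighbour $b_j$). Finally, $(L,R)$ has no backward arc: any $(u,v) \in E_b$ with $v \in L$ extends $v$'s directed path-to-$b_i$ by one edge, forcing $u \in L$ as well. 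Thus $(L,R)$ satisfies (i) and (ii), yielding an MIS with $b_i$ on the left and $b_j$ on the right.

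For the impossibility half, suppose for contradiction that some MIS cut $(L',R')$ places $b_i \in R'$ and $b_j \in L'$. Then the arc $e_{ij} = (b_i,b_j) \in E_b$ goes from $R'$ to $L'$, violating condition (ii). Hence no such staircase exists, completing the lemma.

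The main obstacle is justifying the equivalence between floorplan monotone staircases and no-backward-arc bipartitions of $G_b$ with rigor; this is a geometric-combinatorial duality that the BAG construction in \cite{karb,majum1} already underwrites, so I expect it to be stated as an immediate consequence of the definitions rather than a lengthy argument. Once this duality is established, the rest of the proof is essentially a one-line reachability argument on the DAG (for existence) together with a direct invocation of the cut condition (for impossibility).
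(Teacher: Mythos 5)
The paper offers no proof of this lemma at all --- it simply cites \cite{majum1} --- so there is no in-paper argument to compare yours against. Your existence half is sound as DAG reasoning: taking $L$ to be the set of vertices with a directed path to $b_i$ (together with $b_i$), using acyclicity to place $b_j$ in $R$, uniqueness of the zero-in-degree source to place it in $L$, and closure of $L$ under predecessors to rule out backward arcs is a correct and clean construction of a backward-arc-free source--sink cut separating $b_i$ from $b_j$.

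The genuine gap is the ``duality'' you defer to: the claim that $(L,R)$ is a valid monotone staircase of the floorplan if and only if it is a source--sink cut of $G_b$ with no backward arc. This equivalence is not an immediate consequence of the BAG definition --- it \emph{is} the geometric content of the lemma. In particular, your impossibility half becomes a tautology: you rule out a staircase with $b_i\in R'$ and $b_j\in L'$ by invoking ``condition (ii),'' but condition (ii) is exactly the statement to be proved, namely that a monotone curve cannot separate two adjacent blocks the wrong way around. Note also that in this paper the ``backward arc $\Rightarrow$ non-monotone'' direction appears as Corollary \ref{cor:1}, which is \emph{derived from} Lemma \ref{lem:1}, so your proof inverts the logical order of the paper. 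To close the gap you would need the geometric argument (as in \cite{majum1} and illustrated in Fig.~\ref{fig:mscprop}): if $b_i$ is to the left of or above an adjacent $b_j$ and shares a boundary segment with it, then any staircase curve placing $b_j$ on the source side and $b_i$ on the sink side must cross that shared boundary with a reversal of direction, contradicting monotonicity; conversely, the boundary of a backward-arc-free cut traces a monotone curve. Only with that established does the rest of your argument go through.
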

\begin{proof}
In \cite{majum1}.
\end{proof}

\begin{figure} [!ht]
\centering
\includegraphics[scale=0.35]{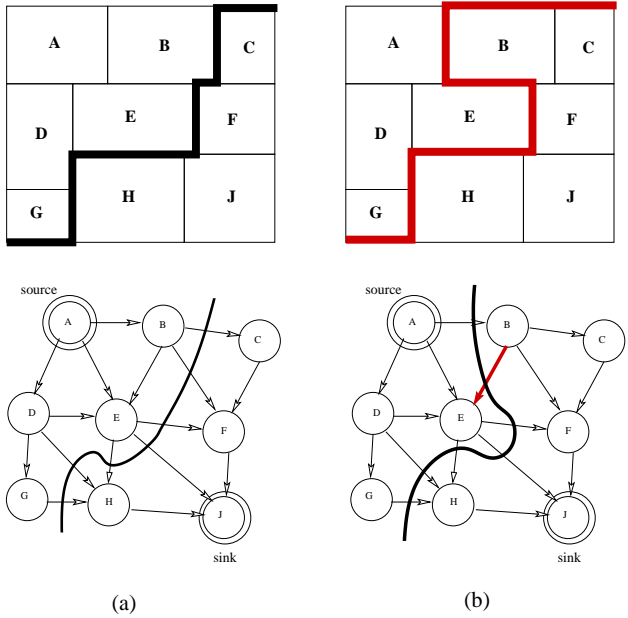}
\caption{Illustration of Lemma \ref{lem:1}: a floorplan with a (a) monotone staircase, and (b) non-monotone staircase}
\label{fig:mscprop}
\end{figure}
In Fig. \ref{fig:mscprop}, we illustrate the working of Lemma \ref{lem:1} for an MIS cut, which is equally applicable for an MDS cut. It shows that all the cut edges in the BAG are forward edges, i.e., directed from the left partition containing the source vertex $A$ towards the right partition containing the sink vertex $J$ yielding a valid monotone staircase cut. However, in Fig. \ref{fig:mscprop} (b), the highlighted edge ($B$,$E$) in the BAG is directed from the right partition to the left partition. This cut leads to a non-monotone staircase cut. From this illustration and Lemma \ref{lem:1}, we observe that it requires at least one back edge directed from the right to left partition to generate a non-monotone staircase cut.

\begin{corollary}
\label{cor:1}
Given a BAG formulated for obtaining a MIS (MDS) cut, any cut which has at least one \textit{back edge} results in a \textit{non-monotone staircase cut}.
\end{corollary}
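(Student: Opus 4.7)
The plan is to derive this as the contrapositive of Lemma \ref{lem:1}. Suppose a BAG cut produces a left partition $L$ containing the source vertex and a right partition $R$ containing the sink vertex, and suppose the cut has at least one back edge $e = (v_j, v_i) \in E_b$ with $v_j \in R$ and $v_i \in L$.

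First, I would invoke Lemma \ref{lem:1} on this specific arc $(v_j, v_i)$. The lemma guarantees that whenever such an arc exists in $G_b$, any monotone staircase must place $b_j$ in the left partition and $b_i$ in the right partition; moreover, it explicitly rules out any monotone staircase that places $b_j$ on the right and $b_i$ on the left.

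Next I would observe that the cut under consideration realises exactly the forbidden configuration: by the assumption that $e$ is a back edge, $b_j$ lies in $R$ and $b_i$ lies in $L$. Therefore no monotone staircase can induce this cut, which by definition means the cut is non-monotone. Since only the existence of a single back edge was used, the corollary follows immediately, and the same argument applies symmetrically to the MDS case because Lemma \ref{lem:1} is stated for both orientations of BAG.

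The step requiring the most care is purely a bookkeeping one: making sure the orientation convention (source at top-left/bottom-left, sink at bottom-right/top-right) is consistent, so that "back edge" is unambiguously identified with an arc from $R$ to $L$ and matches the "right-to-left" direction forbidden by Lemma \ref{lem:1}. Beyond this orientation check there is no real obstacle, since the result is essentially a direct restatement of the contrapositive.
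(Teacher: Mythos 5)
Your argument is correct and is essentially the paper's own proof, which simply cites Lemma \ref{lem:1} and Fig.~\ref{fig:mscprop}(b): you apply the lemma to the back arc $(v_j,v_i)$ and observe that the cut realises exactly the configuration ($b_j$ on the right, $b_i$ on the left) that the lemma forbids for any monotone staircase. One small caution: your claim that the lemma forces \emph{every} monotone staircase to place $b_j$ left and $b_i$ right overstates it (a monotone staircase may leave both blocks on the same side); but since you only use the ``forbidden configuration'' clause, this does not affect the proof.
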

\begin{proof}
From Lemma \ref{lem:1} and Fig. \ref{fig:mscprop} (b).
\end{proof}

In order to study the advantage of early global routing using monotone staircase patterns over non-monotone staircases, we consider the example in Fig. \ref{fig:mscprop_adv} for two different routing instances of a two pin net $n$ having terminal pins ($A$, $B$). Wirelength for the monotone routing path is equal to half of the bounding box length of the net, i.e., \textit{half perimeter wirelength} (HPWL), while that of the non-monotone path yields extra wirelength beyond HPWL. This eventually consumes more routing area and hence increases the congestion in the routing regions, impacting the routability of the nets. A non-monotone pattern may also require more number of vias depending the number of bends in it. On the other hand, a suitably chosen monotone staircase pattern with fewer bends in it may yield fewer via counts. Therefore, pattern routing using non-monotone staircases is not beneficial for identifying a shortest routing path, as well as fewer via counts. Nevertheless, non-monotone routing \cite{zhang2} or maze routing \cite{leem,sherw} can be effective when monotone or L/Z \cite{kast,zcao} patterns can not be used due to heavy congestion and more routing blockages due to already routed nets within the bounding box of a net. This leads to a detoured routing path with increased wirelength and possibly higher via count, identified using non-monotone or maze routing. 
\begin{figure}[!ht]
\centering
\includegraphics[scale=0.37]{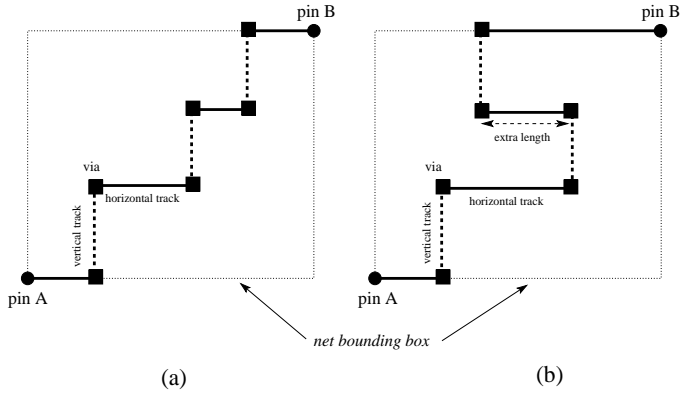}
\caption{Illustrating two routing instances of a $2$-pin net $n$ = ($A$,$B$) using a: (a) monotone staircase path, and (b) non-monotone staircase path}
\label{fig:mscprop_adv}
\end{figure}

Our study also shows that a very large number of monotone routing paths with varying number of bends are possible within the bounding box of a net, L/Z patterns being a subset of all those possible patterns with only one/two via overhead. An optimal monotone pattern is the one which takes minimal number of (bends) vias to complete the routing between a pair of pins through a set of metal layers, thus motivating this work. In this paper, the proposed recursive bipartitioning framework identifies a set of optimal monotone staircases with minimal number of bends in a given floorplan, for early global routing of the nets with minimal wirelength and via count. In this paper, we used only STAIRoute as the early global routing tool, by preferred directional routing in different metal layers.

\section{Monotone Staircase Bipartitioning with Minimal Bends}
\label{sec:bend}
In this section, we discuss the proposed recursive floorplan bipartitioning method in order to identify a set of minimal bend monotone staircase routing regions in a floorplan, for obtaining the shortest routing paths of a set of nets in floorplan, by an early global routing framework such as STAIRoute \cite{karb2}. Before that, we study the impact of a number of bends in a monotone staircase routing region on the number of vias when a net is routed through it, using reserved layer model for layer assignment of the net segments in different routing layers. In this routing model, horizontal and vertical segments of a net are routed through designated metal layers, say $M1$ and $M2$ respectively (see Fig. \ref{fig:mscprop_adv}). This requires inter-layer metal interconnects, called vias, to establish electrical connections between the wire segments of a net running in different layers. For the sake of simplicity, we assume routing with two routing layers ($M1, M2$), although it can be extended to any number of permissible layers in the fabrication processes. 
\begin{figure}[!ht]
\centering
\includegraphics[scale=0.4]{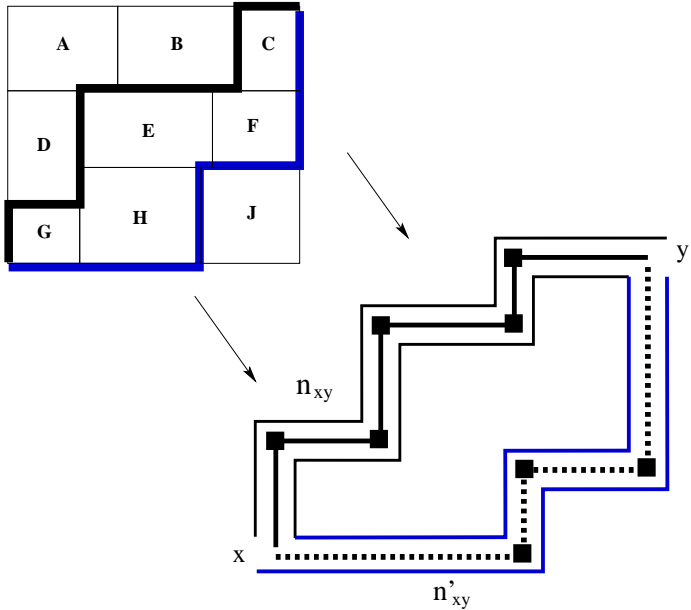}
\caption{Impact of bends in a monotone staircase region on the number of vias (marked as $\blacksquare$)}
\label{fig:bend_via}
\end{figure}

We consider two different routing instances between the terminal points (pins) $x$ and $y$ of a net segment $n$ as depicted in Fig. \ref{fig:bend_via}. These routes, denoted as $n_{xy}$ and $n'_{xy}$ respectively, use different monotone staircase paths with different bend counts. While routing path $n_{xy}$ uses five vias, $n'_{xy}$ requires only three vias. From this example, we infer that a monotone staircase with fewer bends can potentially reduce the number of vias when a net is routed through it using different metal layers and hence serves as the motivation of this work. 

The problem definition in this work is augmented over the existing bipartitioning methods such \cite{dasg,karb,majum1,majum2}, considering a new objective of bend minimization. We enlist the objectives of this new multi-objective optimization problem as below:
\begin{enumerate}
\item balance ratio $balr$ = min($A_l$,$A_r$)/max($A_l$,$A_r$) be maximized
\item the number of cut nets ($k_c$) be minimized, {\it and}
\item the number of bends ($z$) in the monotone staircase be minimized
\end{enumerate}
where  $A_{l(r)} = \sum_{b_i \in B_{l(r)}} \mathrm{Area(b_i)}$, the area of the left (right) partition and $B_{l(r)}$ denotes the set of blocks in the left (right) partition. The number balanced bipartition problem can be seen as a restricted version of the area balanced bipartitioning problem when the area of each block is almost equal, i.e., having negligible variance in block area such that they can be normalized to unity. In this case, $balr$ is defined as min($n_l$,$n_r$)/max($n_l$,$n_r$), where $n_l$ ($n_r$) denotes the number of blocks in the left (right) partition.

A linear combination function of these objectives, with a pair of trade-off parameters ($\gamma$, $\beta$), is defined as below:
\begin{equation}
\label{eq:2}
Gain = \gamma .balr + (1-\gamma -\beta)(1-k_c/k) + \beta(1-z/z_{max})
\end{equation}
where $z_{max}$ is the maximum possible number of bends if the constituent rectilinear segments in the corresponding monotone staircase had alternating (vertical or horizontal) orientation. It is computed as one fewer than the number of segments in it. Notably, Eqn. \ref{eq:2} is similar to that defined in \cite{karb} when $\beta = 0.0$. Careful selection of ($\gamma$, $\beta$) pair may yield an optimal balance among these objectives, not necessarily a global optimum. Since the area balanced bipartitioning is an NP-hard problem \cite{majum2}, the optimum balance among these objectives is hard to obtain in polynomial time. Instead, for a given ($\gamma$, $\beta$) pair, an optimal monotone staircase with maximum $Gain$ is chosen out of those with $Gain$ values in the sequence of $n-1$ bipartitions of a floorplan of $n$ blocks \cite{karb} (see Fig. \ref{fig:bend}), at a given bipartition hierarchy. In Section \ref{sec:result}, we study the bipartitioning results with a range of ($\gamma$, $\beta$) values on a set of floorplan benchmark circuits.

\begin{figure}[!ht]
\centering
\includegraphics[scale=0.36]{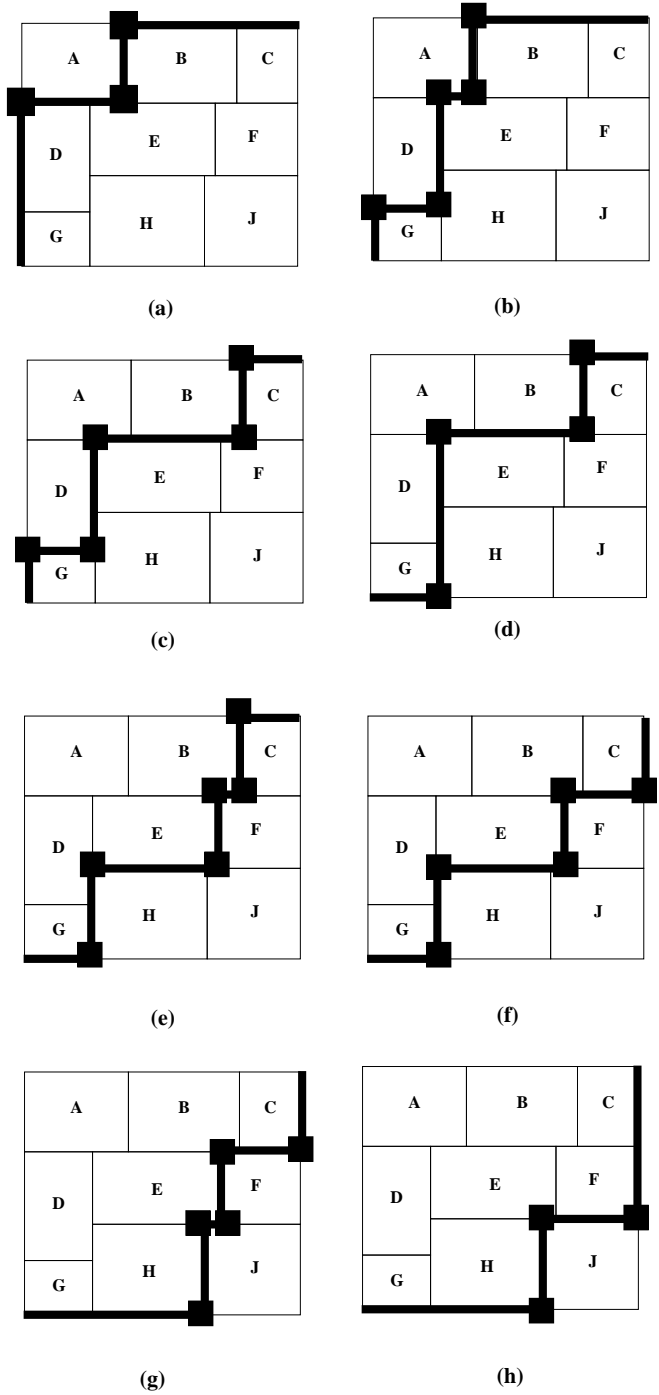}
\caption{A sequence of $n-1$ monotone staircases with varying number of bends (denoted as $\blacksquare$) in a given floorplan: (a) $3$, (b) $5$, (c) $5$, (d) $4$, (e) $6$, (f) $5$, (g) $5$ and (g) $3$}
\label{fig:bend}
\end{figure}

Now we refer to Fig. \ref{fig:bend} for the working of this bipartitioning framework while maximizing the area in each partition and assessing the corresponding bends in the resulting monotone staircase. In this study, we do not consider minimal net cut for the sake of simplicity and restrict only to area balance and minimal bend count. The bipartition instance in Fig. \ref{fig:bend} (a) and (h) gives minimum number of bends ($z=3$), but with poor area balance. The area balance between the partitions keeps on improving through the instances depicted in Fig. \ref{fig:bend} (b)-(e) with varying number of bends, while it declines for instances shown in Fig. \ref{fig:bend} (f)-(h). The best possible area balance may be attained in case of the bipartition in Fig. \ref{fig:bend} (e), but yields the worst bend count ($z=6$) among all others. Therefore, a suitable trade-off between area balance and bend count has to be made based on ($\gamma$, $\beta$) values. The bipartition instance with $z=4$ in Fig. \ref{fig:bend} (d) appears to be a good choice among all the other instances. The following lemma gives a measure of the number of bends in a monotone staircase.
\begin{lemma}
\label{lem:b1}
Given a floorplan with $n$ blocks, the number of bends in a monotone staircase routing region is $O(n)$.
\end{lemma}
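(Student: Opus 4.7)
The plan is to bound the number of bends by relating them to vertices of the planar subdivision induced by the floorplan, and then to invoke the standard linear bound on the size of that subdivision.

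First, I would normalise the notion of a monotone staircase. The staircase is a rectilinear path, say from the top-left corner to the bottom-right corner of the bounding rectangle (for an MDS cut; the MIS case is symmetric), whose segments alternate between horizontal and vertical. Alternation can be assumed without loss of generality, since two consecutive segments with the same orientation would collapse into a single segment and thus would not be counted as a bend. Under this convention, if the staircase has $s$ segments then it has exactly $s-1$ bends, so bounding $s$ suffices.

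Second, I would argue that every segment of the staircase lies along an edge of the planar graph $\mathcal{P}$ formed by the block boundaries of the floorplan, and that every bend occurs at a vertex of $\mathcal{P}$. This is because, in the interior of any edge of $\mathcal{P}$, exactly one block sits on each side; by Lemma~\ref{lem:1} these two blocks are consistently placed in the left and right partitions, so the staircase has no opportunity to change direction until it reaches a vertex of $\mathcal{P}$ (a corner or $T$-junction of the floorplan). Consequently the bends are a subset of the vertices of $\mathcal{P}$, and because the staircase is a simple monotone path, distinct bends occur at distinct vertices.

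Third, I would invoke the standard consequence of Euler's formula for rectilinear floorplans: a partition of a rectangle into $n$ rectangular blocks has $O(n)$ vertices, $O(n)$ edges, and $O(n)$ faces in its planar subdivision (each vertex has degree at most four, and the number of faces equals $n$ plus the outer face). Combining with the previous step, the number of bends is at most the number of vertices of $\mathcal{P}$, which is $O(n)$.

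The only mildly delicate point, and the one I expect to write out most carefully, is the claim that the staircase can turn only at vertices of $\mathcal{P}$ and not along the interior of a block boundary. Once that is pinned down using the monotone staircase property of Lemma~\ref{lem:1}, the rest is a direct application of the folklore linear bound on the complexity of a rectangular floorplan.
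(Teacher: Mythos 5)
Your proof is correct, but it reaches the $O(n)$ bound by a different route than the paper. You charge each bend to a distinct vertex (corner or $T$-junction) of the geometric planar subdivision induced by the block boundaries, and then invoke the Euler-formula consequence that a dissection of a rectangle into $n$ rectangles has $O(n)$ vertices. The paper instead works entirely on the combinatorial side: it observes that the number of bends is at most one fewer than the number of cut edges of the block adjacency graph $G_b$ (each bend arising from a change of orientation between contiguous cut edges), and then uses the fact that $G_b$ is planar, so $|E_b| = O(n)$ and a fortiori the cut-edge set is $O(n)$. Both arguments ultimately rest on planarity, but they count different objects; the paper's version has the advantage of bounding bends directly in terms of the data structure the algorithms actually traverse, which is why the same $O(n)$ count reappears verbatim in the per-iteration cost analysis of Theorem~\ref{theo:b1}, whereas yours is more self-contained geometrically and does not presuppose the planarity of $G_b$ as a citation. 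One small quibble: the step you flag as delicate --- that the staircase cannot turn in the interior of an edge of the subdivision --- does not really need Lemma~\ref{lem:1}. It follows simply from the fact that the cut is a rectilinear path running along block boundaries: in the interior of a straight boundary segment a turn would send the path into the interior of one of the two abutting blocks. Lemma~\ref{lem:1} governs monotonicity (absence of back edges), not the location of bends, so invoking it there slightly misattributes the reason the claim holds.
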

\begin{proof}
The number of bends in a monotone staircase can be at most one fewer than the number of cut edges in BAG $G_b$ due to alternate orientation of the contiguous cut edges. Since $G_b$ is a planar graph \cite{karb,majum1} and $|E_b|$ is $O(n)$, the number of cut edges (a subset of $E_b$) that constitutes a monotone staircase is also $O(n)$.
\end{proof}


\subsection{The Algorithm: BFS based Greedy Approach}
The pseudo-code for the proposed monotone staircase bipartitioning method with minimal bends, namely \textit{MSCut\_Bend\_BFS}, is presented in Algorithm \ref{alg:mscutbend}. The inputs to this method are the BAG $G_b$ obtained from a given floorplan $F$ of a set of blocks $B$, a set of nets $N$, the trade-off parameters ($\gamma$, $\beta$) such that $\gamma, \beta \in [0,1]$. The balance type $baltype$ dictates either an \textit{area} or a \textit{number} balanced bipartitioning \cite{majum1,majum2}. Unlike the previous works, we focus on \textit{area} balanced bipartitioning only, since number balanced mode is a special case of it. The key differences between \textit{MSCut\_Bend\_BFS} and the bipartitioning method in \cite{karb} are: (i) bend minimization considered as an additional objective, and (ii) no restriction on the convergence within user-defined area bounds. In rare floorplan instances, these area bounds in \cite{karb} may lead to exploration of a sequence of $n-1$ monotone staircases. On contrary, our method is able to explore a sequence of $n-1$ staircases without any such constraints on any floorplan of $n$ blocks. 
\begin{algorithm}
\SetAlgoLined
\SetKwData{Left}{left}\SetKwData{This}{this}\SetKwData{Up}{up}
\SetKwFunction{Union}{Union}\SetKwFunction{FindCompress}{FindCompress}
\SetKwInOut{Input}{input}\SetKwInOut{Output}{output}

\Input{$G_b$, $N$, $\gamma$, $\beta$, $baltype$}
\Output{An optimal monotone staircase for a given ($\gamma$, $\beta$) with maximal area balance, minimal net cut and minimal number of bends}
\BlankLine

Initialize a Queue $Q$ and the left partition $L$ = $\varnothing$\\
Enqueue the source vertex of $G_b$ in $Q$ as (BFS) level $0$ vertex, and include it in $L$ (right partition $R$ = $V_b \setminus L$)\\
/* A vertex once enqueued always remains in $L$ \cite{karb}*/\\
Also enqueue $\varnothing$ as BFS level indicator\\
\While{$Q$ is not empty}{
Let $v_i$ be the dequeued vertex\\
\If{($v_i $ $\neq$ $\varnothing$)}{
\For{($v_j$ $\in$ adj($v_i$))}{
\If{($v_i,v_j$) results in a valid ms-cut (see Lemma \ref{lem:1})}{
Enqueue the vertex $v_j$ and include it in $L$\\
Compute the parameters for the ($L,R$) partition (see Eqn. \ref{eq:2}) and store them in a list $\lambda$
}
}
}
\Else{
Increment BFS level\\
Enqueue $\varnothing$ as next BFS level indicator\\
}
}
Return an optimal monotone staircase with the maximum $Gain$ value $C_{max} \in \lambda$\\
\caption{MSCut\_Bend\_BFS}
\label{alg:mscutbend}
\end{algorithm}

The recursive procedure for obtaining a set of minimal bend monotone staircases for the entire floorplan is presented in Algorithm \ref{alg:msctreebend}, by recursively calling \textit{MSCut\_Bend\_BFS} with a set of required inputs. Here, $stype$ dictate the output staircase type, either an MIS or MDS (see Fig. \ref{fig:bag}). In this procedure, the root node of the bipartition hierarchy starts with a particular type e.g. MIS, followed by alternating types in the subsequent levels of the hierarchy. An example of a bipartition (MSC) tree in Fig. \ref{fig:msctree} illustrates a set of optimal monotone staircases (MIS/MDS) with minimal bends are overlaid on an input floorplan of $17$ blocks.
\begin{algorithm}
\SetAlgoLined
\SetKwData{Left}{left}\SetKwData{This}{this}\SetKwData{Up}{up}
\SetKwFunction{Union}{Union}\SetKwFunction{FindCompress}{FindCompress}
\SetKwInOut{Input}{input}\SetKwInOut{Output}{output}

\Input{$B$, $N$, $F$, $stype$, $\gamma$, $\beta$, $baltype$}
\Output{A bipartition hierarchy (\textit{MSC tree}) with increasing (decreasing) monotone staircases MIS (MDS) at alternate level}
\BlankLine

\If{($Root\_node \parallel TreeLevel \% 2 = 0$)}{
$stype = 1$
}
\Else{
$stype = 0$
}
\BlankLine

$G_b$ = ConstructBAG($B, F, stype$) /* (see Fig. \ref{fig:bag}) */\\
Node.cut = MSCut\_Bend\_BFS($G_b$, $N$, $\gamma$, $\beta$, $baltype$)\\
Node.Level = $TreeLevel$; increment $TreeLevel$\\
\If{($|B_l| \geq 2$)}{
Node.left = Hier\_MSCut\_Bend($B_l$, $N_l$, $F_l$, $stype$, $\gamma$, $\beta$, $baltype$)
}
\BlankLine

\If{($|B_r| \geq 2$)}{
Node.right = Hier\_MSCut\_Bend($B_r$, $N_r$, $F_r$, $stype$, $\gamma$, $\beta$, $baltype$)
}
Return Node.

\caption{Hier\_MSCut\_Bend}
\label{alg:msctreebend}
\end{algorithm}

\begin{figure}[!ht]
\centering
\includegraphics[scale=0.45]{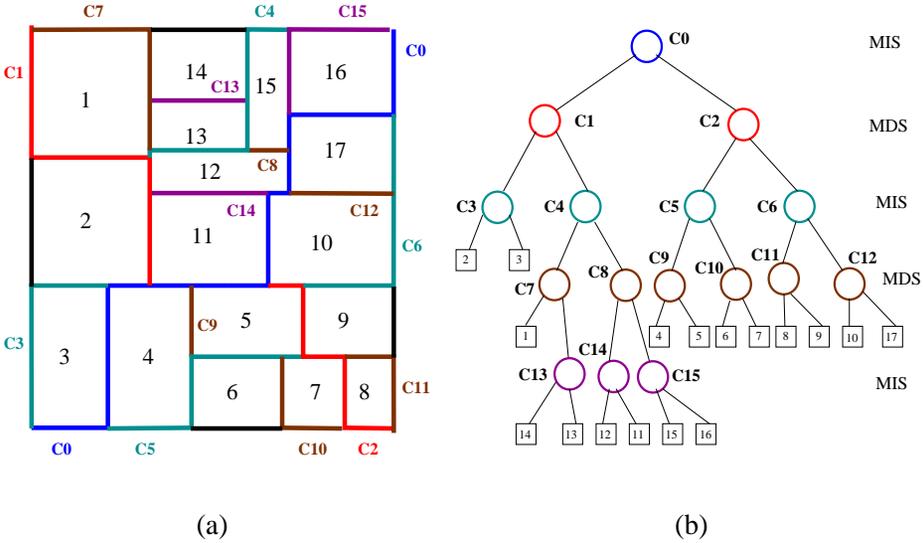}
\caption{A floorplan of $17$ blocks (a) with monotone increasing/decreasing staircases (MIS/MDS), and (b) a (neraly) balanced bipartition tree (MSC tree \cite{karb}) for a ($\gamma,\beta$) pair}
\label{fig:msctree}
\end{figure}

\begin{theorem}
\label{theo:b1}
Given a floorplan with $n$ blocks and $k$ nets, Hier\_MSCut\_Bend takes $O((n^2+nk)\log n)$ time to generate a hierarchy of minimal bend monotone staircases in it.
\end{theorem}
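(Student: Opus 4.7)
The plan is to bound the cost of a single call to \textit{MSCut\_Bend\_BFS} on a subfloorplan of $n'$ blocks and $k'$ nets, and then sum this cost over the recursion tree produced by \textit{Hier\_MSCut\_Bend}, which has height $O(\log n)$ since each bipartition is (nearly) balanced.

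\textbf{Step 1: Cost of one call to \textit{MSCut\_Bend\_BFS}.} First I would bound the construction of the BAG $G_b$. By the planarity of $G_b$ established in \cite{karb,majum1}, $|V_b|=n'$ and $|E_b|=O(n')$, so building $G_b$ takes $O(n')$ time. The BFS then visits each vertex and each edge at most once, generating a sequence of $n'-1$ candidate ms-cuts along the way (exactly one new candidate cut each time a vertex is added to $L$). For each candidate cut I have to (i) check validity via Corollary~\ref{cor:1} (no back edge from $R$ to $L$), (ii) update the balance ratio, (iii) update the number of cut nets, and (iv) update the bend count $z$. A straightforward implementation recomputes the net-cut parameter by scanning the $k'$ nets, giving $O(k')$ per cut, and computes the bend count by walking the current staircase, which by Lemma~\ref{lem:b1} has $O(n')$ cut edges. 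Summed over the $n'-1$ candidates, a single invocation therefore costs $O\!\bigl(n'^2+n'k'\bigr)$, plus $O(n')$ at the end to select $C_{max}\in\lambda$.

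\textbf{Step 2: Recursion tree analysis.} The recursive procedure \textit{Hier\_MSCut\_Bend} produces the MSC tree depicted in Fig.~\ref{fig:msctree}. Because the $Gain$ criterion favours a balanced split ($\gamma>0$), the subproblems at depth $\ell$ each contain $O(n/2^\ell)$ blocks, so the tree has height $h=O(\log n)$. I would then argue level by level. Fix a level $\ell$, let $\{(n_i,k_i)\}$ denote the sub-instances at that level, and note that $\sum_i n_i \le n$ and $\sum_i k_i \le k$ (any cut net is passed to only those children in which it retains at least two pins, as explained just after Lemma~\ref{lem:1}). Using $n_i\le n$ and $k_i\le k$, the level-$\ell$ work is bounded by
\begin{equation*}
\sum_i O\!\bigl(n_i^2+n_i k_i\bigr)\;\le\;O(n)\sum_i n_i + O(n)\sum_i k_i \;=\; O(n^2+nk).
\end{equation*}
Multiplying by the $O(\log n)$ levels gives the claimed $O\!\bigl((n^2+nk)\log n\bigr)$ total.

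\textbf{Main obstacle.} The delicate part will be making the per-call bound $O(n'^2+n'k')$ airtight, specifically for the bend counter $z$: since consecutive BFS-added vertices can extend the staircase on either end and the "alternating orientation" property from Lemma~\ref{lem:b1} must be respected, I have to argue carefully that $z$ can be (re)computed in $O(n')$ per cut — either by walking the O(n') cut edges of the current MIS/MDS once, or by an incremental $O(\deg(v_j))$ update when $v_j$ joins $L$ (which would actually give a better amortized $O(n')$ over the whole BFS). The bound on $\sum_i k_i$ at each level also deserves a careful justification, because a cut net that has at least two pins on both sides is counted in both children; in the strict worst case the correct invariant is $\sum_i k_i\le 2k$ per split rather than $\le k$ per level, but as this only changes constants in the level sum it does not affect the final $O((n^2+nk)\log n)$ bound.
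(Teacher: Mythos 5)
Your proof is correct and follows essentially the same route as the paper's: an $O(n^2+nk)$ bound per call to \textit{MSCut\_Bend\_BFS} (BAG construction in $O(n)$ by planarity, $O(n)$ bend counting via Lemma~\ref{lem:b1} and $O(k)$ net bipartition for each of the $n-1$ candidate cuts), multiplied by the $O(\log n)$ depth of the nearly balanced MSC tree. Your level-by-level summation in Step~2 is in fact slightly more careful than the paper's argument, which simply asserts the single-call bound and multiplies by the number of levels without explicitly summing over the sub-instances at each level.
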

\begin{proof}
Since the block adjacency graph $G_b$ of a given floorplan instance $F$ for $n$ blocks is a planar graph, its construction takes $O(n)$ time. By Lemma \ref{lem:b1}, each while loop in Algoritth \ref{alg:mscutbend} (MSCut\_Bend\_BFS) takes $O(n)$ for identifying $O(n)$ bends and $O(k)$ for net bipartition. Thus, at any recursion level , each call to MSCut\_Bend\_BFS takes $O(n+n^2+nk)$, i.e., $O(n^2+nk)$. Since, MSCut\_Bend\_BFS yields a (nearly) balanced bipartition of the (sub)floorplans at each recursion, the number of levels in the bipartition hierarchy (called MSC tree \cite{karb}) is $O(\log n)$. Therefore, for the entire bipartition hierarchy of $O(\log n)$ levels, the recursive procedure Hier\_MSCut\_Bend takes $O((n^2+nk)\log n)$ time to identify a set of minimal bend monotone staircases for the entire floorplan $F$.
\end{proof}

In Section \ref{sec:result}, we provide a few experimental results to show that the bipartition hierarchy, i.e., MSC tree has $O(\log n)$ height for any floorplan instance $F$ of a circuit containing $n$ blocks with any area distribution.

\section{A New Randomized Neighbor Search Approach}
\label{sec:msc-rand}
\begin{figure*}[!ht]
\centering
\includegraphics[scale=0.3,angle=90]{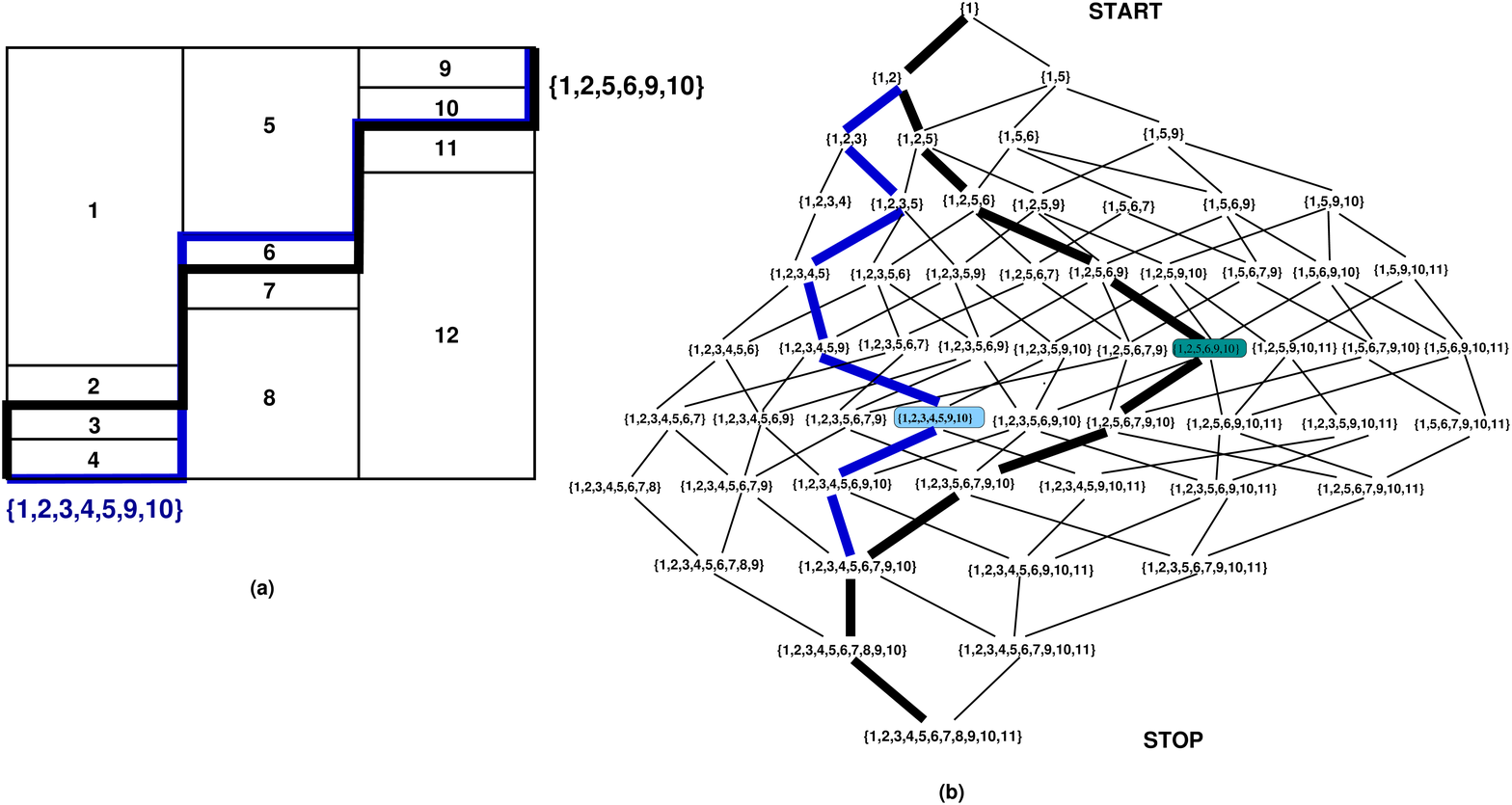}
\caption{A floorplan of $12$ blocks (a) overlaid with an optimal monotone staircase \textbf{\color{blue} $\{1,2,3,4,5,9,10\}$} and a near optimal staircase $\{1,2,5,6,9,10\}$, and (b) the corresponding hasse diagram \cite{hasse} for exponentially large number of sequences (paths) of monotone staircases: the \textbf{\color{blue} blue} path containing \textbf{\color{blue} $\{1,2,3,4,5,9,10\}$} and \textbf{black} path containing $\{1,2,5,6,9,10\}$}
\label{fig:hasse}
\end{figure*}

Given floorplan $F$ for a set of $n$ blocks, the number of all possible monotone staircases in $F$ is exponentially large. Hence, the problem of finding the optimum monotone staircase is known to be NP-Hard \cite{dasg,majum2}. As discussed in Section \ref{sec:bend}, an (near) optimal solution of monotone staircase bipartition implies a suitable trade-off between the constituent objectives: (a) maximizing the area of each bipartition, (b) minimizing the number of bends in the corresponding monotone staircase, and (c) the number of cut nets by this bipartition. Since the area balanced bipartitioning is an NP-Hard problem \cite{majum2}, no polynomial time algorithm exists. Hence, several greedy heuristic approaches have been proposed in \cite{dasg,karb,majum1,majum2} and in Section \ref{sec:bend}. In all cases, a monotone staircase cut with the maximum $Gain$ value pertaining to a given trade-off among the objectives is considered as an optimal bipartition. As stated in Section \ref{sec:bend}, we pick an (nearly) optimal monotone staircase among a sequence of $n-1$ monotone staircases for a given ($\gamma, \beta$) pair. Intuitively, different ($\gamma, \beta$) pairs may potentially yield different optimal solution(s) and even a different sequence.

Given a set $B$ of $n$ blocks for a given floorplan $F$, a monotone staircase bipartition ($L$, $R$) represents a proper subset of $B$. In other words, the blocks in the left partition $L$ (hence the right partition $R$ = $B\setminus L$) constitute a proper subset of $B$, while obeying the monotone staircase property (refer to Lemma \ref{lem:1} \cite{majum1}). Thus, ($L$, $R$) represents a valid monotone staircase cut on the block adjacencyy graph (BAG) for $F$. In summary, the set of all possible monotone staircases $S_m$ in $F$ is a subset of power set of $B$ ($S_m \subseteq power(B)$). Notably, $S_m$ is a partially ordered set by inclusion $\subseteq$ operation on all the monotone staircases in $F$ that can be identified in exponential time. A staircase $s_m \in S_m$ covers a set of one or more staircases $\{s_n\} \subseteq S_m$ if $s_m$ can be obtained from $\{s_n\}$. Based on this, we construct the corresponding \textit{hasse diagram} \cite{hasse} pertaining to $S_m$. An example hasse diagram for a floorplan of $n$ = $12$ (and $|S_m| = 56$) is illustrated in Fig. \ref{fig:hasse}. 

In Section \ref{sec:bend} and also in \cite{karb}, we studied that a sequence of $n-1$ monotone staircases can be identified greedily at any level of bipartition hierarchy by the respective bipartitioning methods. An optimal solution is identified from this sequence based on a given trade-off ($\gamma, \beta$). There is a scope of obtaining an improved solution if more than $n-1$ staircases can be explored, with proportionally higher runtime overhead. In this section, we present a new technique for exploring the neighbors of a block (vertex) in the BAG $G_b$ for a potentially better optimal monotone staircase (obeying Lemma \ref{lem:1} in terms of the objectives considered. We study how selection of a neighbor is done based on random indexing of the neighbors of vertex $v_i$ in $G_b$. Alike the BFS based method (see Algorithm \ref{alg:mscutbend}), the proposed bipartitioning method also adopts BFS on $G_b$. However, this method can more aptly resemble with an wave-front propagation in Ether. This may lead to different sequences of (not necessarily disjoint) monotone staircase cuts on the BAG. In Fig. \ref{fig:hasse} (b), an example of these sequences are highlighted by different paths, one with black and other by blue color, from \textit{START} to \textit{STOP} node in the \textit{hasse diagram}. In this diagram, each node represent a distinct monotone staircase and edges represent their possible transition to another distinct monotone staircase. In other words, these edges represent the inclusion operation. While the directed search method in Section \ref{sec:bend} identifies only one sequence marked by the bold black line in Fig. \ref{fig:hasse} (b), the randomized method under discussion identifies different sequences during different trials of the proposed randomized neiighbor search technique. Notably, the number of sequences obtained by the random method can be more than one, but are not necessarily maximally disjoint. It is also evident that the length of such a path (\textit{START} $\rightsquigarrow$ \textit{STOP}) is always $n-1$ as stated in Lemma of \cite{karb}. However, the number of such paths grow exponentially with $n$ and the sequences (hence the Hasse diagram) also differ due to different floorplan topology for the same set of blocks $B$. A comparative study of staircase wave-front propagation using greedy and randomized neighbor search technique is presented in Appendix \ref{apdx:1}.

In Fig. \ref{fig:hasse} (b), we consider an example of two different sequences of $n-1$ monotone staircases marked by the \textbf{\color{blue} blue} and \textbf{black} lines, out of exponentially large number of possible sequences between \textit{START} and \textit{STOP} nodes. Here START and STOP nodes denote trivial monotone staircases containing only one block in the left (right) partition. If one path does not contain an optimal monotone staircase, another path may be explored in a hope to identify an optimal one. Since area balanced monotone staircase bipartitioning is a NP-hard problem, there is no method that verifies such a scenario, unless we apply the brute force method to explore all possible sequences. However, a random transition from one node to another may lead to traversing a new path either completely or partially. Careful study of Fig. \ref{fig:hasse} (b) shows that randomization at the suitable node, say in this case $\{1,2\}$, choosing the block $3$ randomly instead of $5$ (by greedy approach) may guide to a different sequence leading to a potentially optimal solution $S_{opt} = \{1,2,3,4,5,9,10\}$, for a given ($\gamma$, $\beta$) pair. In summary, several such randomized selections (while traversing from \textit{START} $\rightsquigarrow$ \textit{STOP}) may yield an optimal solution or a scope of obtaining a better solution than the previously found optimal solution. A number of such trials may be exercised in order to explore partially/completely different sequences and thus obtain a potentially better solution. However, in order to contain the run time within the same bound as in Section \ref{sec:bend}, a large number of such trials can not be afforded. Instead, we restrict the number of trials to a reasonably small number and use random seeds for each trial. After all such trials, an optimal monotone staircase is identified as the one, with maximum $Gain$ value, among all the staircases explored along different paths in \textit{START} $\rightsquigarrow$ \textit{STOP}.
\begin{figure}[!ht]
\centering
\includegraphics[scale=0.4]{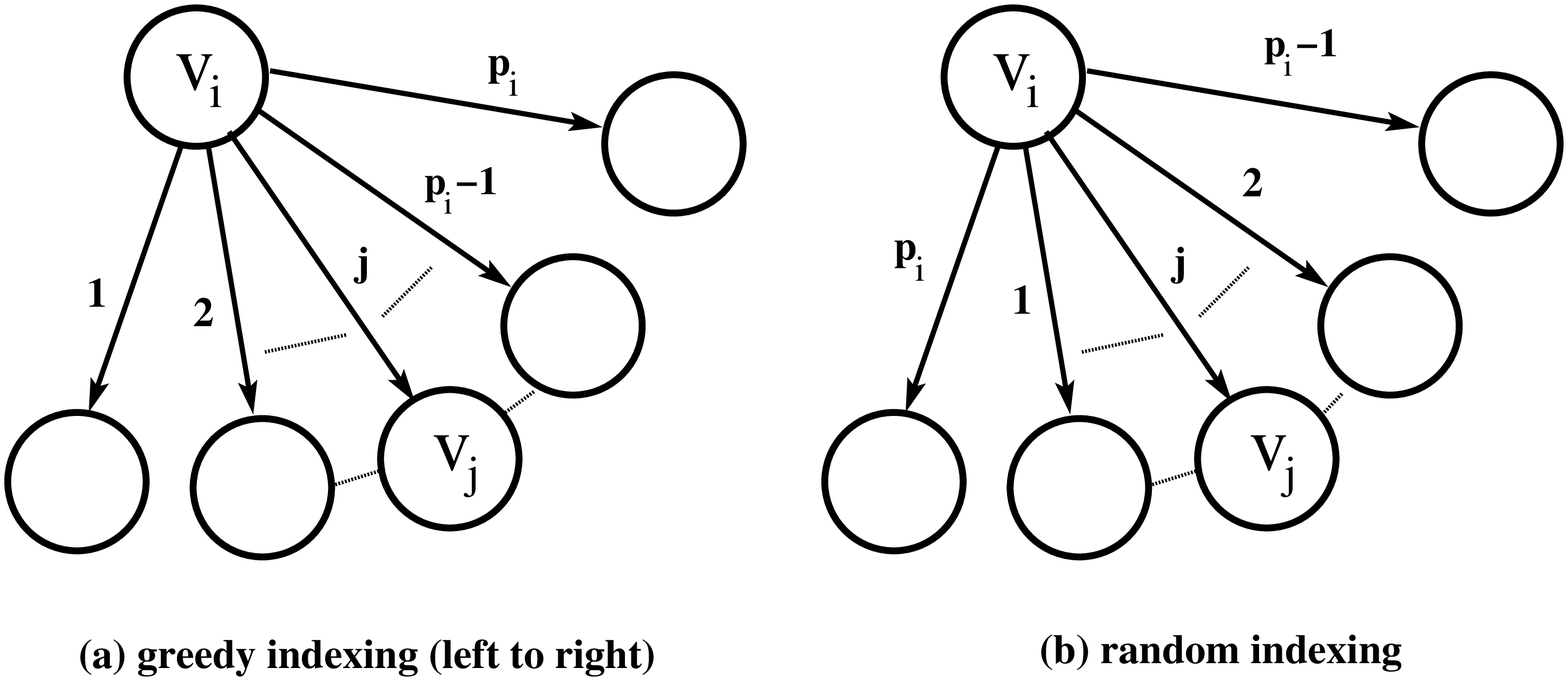}
\caption{Exploring the neighbors of $v_i$ in $G_b$ based on their indexing}
\label{fig:neigh}
\end{figure}

In the proposed randomized neigbor search method \cite{karb4}, the underlying process of randomly indexing the neighbors of a vertex $v_i$ of $G_b$ brings in the difference with the greedy methods \cite{karb,karb3}. Unlike greedy indexing approach, from left to right used in \cite{karb} and also in Section \ref{sec:bend} (see Fig. \ref{fig:neigh} (a)), a neighbor $v_j$ of $v_i$ with out-degree $p$ is indexed with randomly chosen number $j \in [1,p]$ (see Fig. \ref{fig:neigh} (b)). As in \cite{karb} (also Section \ref{sec:bend}), identifying a set of monotone staircases while exploring all adjacent vertices of $v_i$ takes $p$ time for exploring all the neighbors. The following lemma shows that the average runtime improves.
\begin{lemma}
\label{lem:r1}
For a given vertex $v_i$ with out-degree $p$ in $G_b$, the expected time $E[t_{adj}]$ to search its adjacent list to identify one or more distinct monotone staircases is $(p+1)/2$.
\end{lemma}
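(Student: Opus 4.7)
The plan is to treat the search time as a random variable induced by the uniform random indexing of the adjacency list, and then simply take its expectation. First I would set up the probabilistic model: since the randomized procedure assigns every neighbor of $v_i$ a distinct index drawn uniformly at random from $\{1,2,\ldots,p\}$, the index $J$ of the neighbor that gets probed (or equivalently, the position reached in the adjacency list when the first valid ms-cut neighbor of interest is encountered under the randomized traversal) is a discrete uniform random variable on $\{1,2,\ldots,p\}$.

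Next I would argue that the time to locate that neighbor in the adjacency list, measured in the same unit as the greedy scan (one unit per list position inspected), equals the value of $J$: traversing from the head of the list to index $J$ costs exactly $J$ comparisons. This is the same accounting used in Section \ref{sec:bend} for the greedy left-to-right scan, which always costs $p$; the only change here is that the stopping position is randomized instead of fixed.

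Then the expected running time is just the mean of a discrete uniform distribution on $\{1,\ldots,p\}$:
\[
E[t_{adj}] \;=\; \sum_{j=1}^{p} j \cdot \Pr[J=j] \;=\; \frac{1}{p}\sum_{j=1}^{p} j \;=\; \frac{1}{p}\cdot\frac{p(p+1)}{2} \;=\; \frac{p+1}{2},
\]
which is the claimed bound, and in particular strictly less than the deterministic $p$ incurred by the greedy variant.

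Honestly, there is no deep obstacle here; the only subtlety worth spelling out carefully is the modelling choice, namely that the cost of the randomized probe is taken to be the index $J$ itself (so that the greedy worst case of $p$ and the randomized expectation of $(p+1)/2$ are directly comparable). Once that convention is fixed, the statement reduces to the elementary fact $\mathbb{E}[\mathrm{Unif}\{1,\ldots,p\}]=(p+1)/2$, and the proof is essentially a one-line calculation.
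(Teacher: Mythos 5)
Your proof is correct and follows essentially the same route as the paper: both arguments model the probed neighbor's position as uniformly distributed on $\{1,\ldots,p\}$ and compute $E[t_{adj}]=\sum_{j=1}^{p}(1/p)\cdot j=(p+1)/2$. Your version is in fact slightly more careful, since you make explicit the modelling convention (cost equals the index reached) that the paper leaves implicit.
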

\begin{proof}
Since all the vertices in the neighborhood of $v_i$ are equally probable to be picked, with a probability of $1/p$, the expected runtime to search a particular neighbor $v_j$ with random indexing $j \in [1,p]$ is:\\
\qquad $E[t_{adj}]$ = $\sum\limits_{j=1}^{p} (1/p).j $\\
\qquad = $(p+1)/2$\\
\end{proof}

Alike the greedy method in \cite{karb} and Section \ref{sec:bend}, the best case scenario occurs when all the $p$ edges emanating from $v_i$ obey the monotone staircase property (Lemma \ref{lem:1}), thus giving $p$ distinct monotone staircases. The worst case scenario occurs when the number of such edges is only $1$, resulting in only one monotone staircase. The following lemma gives the average number of staircases can be explored by a single vertex $v_i$.
\begin{lemma}
\label{lem:r2}
For a given vertex $v_i$ with out-degree $p$ in $G_b$, $O(p)$ distinct monotone staircases can be identified while obeying Lemma \ref{lem:1}.
\end{lemma}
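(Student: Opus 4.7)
The plan is to bracket the count of distinct monotone staircases explored from $v_i$ by constants times $p$, giving both an $O(p)$ upper bound and (in the best case) an $\Omega(p)$ matching lower bound, exactly mirroring the structure sketched in the paragraph immediately preceding the lemma.

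First, I would fix the setup: at the moment $v_i$ is being processed by the BFS/wave--front, we have a current left partition $L$ containing $v_i$ (and all previously enqueued vertices) and a right partition $R = V_b \setminus L$. Each out--neighbor $v_j \in \mathrm{adj}(v_i)$ is a candidate to be moved from $R$ into $L$, producing a new candidate cut $(L \cup \{v_j\}, R \setminus \{v_j\})$. Since the out--degree of $v_i$ is $p$, there are exactly $p$ such candidate moves to examine.

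Next, I would establish the upper bound. By Corollary \ref{cor:1}, moving $v_j$ across the cut yields a valid monotone staircase only when no back edge is created; hence each neighbor contributes at most one new distinct monotone staircase to the exploration. Summing over the $p$ neighbors, at most $p$ distinct staircases are obtained, giving the $O(p)$ bound claimed.

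For tightness, I would appeal to the two extremes already isolated in the text just above. In the best case every one of the $p$ outgoing edges from $v_i$ individually satisfies the monotone staircase property of Lemma \ref{lem:1}, so each of the $p$ neighbor insertions produces a distinct valid ms--cut, achieving exactly $p$ staircases. In the worst case only one insertion is compatible with Lemma \ref{lem:1} and a single staircase is produced. In both regimes the number of staircases is $\Theta(1)$ to $\Theta(p)$, and combined with the upper bound the count is $O(p)$, which is what the lemma asserts.

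The main obstacle I anticipate is making the notion of \emph{distinct} monotone staircases rigorous in this local, per--vertex accounting: two different neighbors $v_j, v_{j'}$ of $v_i$ always produce set--theoretically different left partitions $L \cup \{v_j\} \neq L \cup \{v_{j'}\}$, so distinctness is immediate, but one must also argue that no additional staircase beyond these $p$ candidates is discovered while processing $v_i$ (so that the tally is truly capped by $p$ and not, say, by the size of the queue at that moment). This is handled by observing that at this step the wave--front only tests single--vertex augmentations of the current $L$ along edges out of $v_i$, so the number of cuts generated in this local step is exactly in bijection with the subset of $\mathrm{adj}(v_i)$ satisfying Lemma \ref{lem:1}, which is at most $p$.
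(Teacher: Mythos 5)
Your proof is correct, but it takes a different route from the paper's. You give a deterministic, worst-case counting argument: while processing $v_i$, each of its $p$ out-neighbors is tested exactly once for compatibility with Lemma \ref{lem:1}, each accepted neighbor yields exactly one new cut with a strictly larger (hence distinct) left partition, and therefore at most $p$ distinct staircases arise, which immediately gives $O(p)$; you then note the two extremes ($1$ and $p$ accepted neighbors) for tightness. The paper instead argues in the average case: it posits that each of the $p$ outgoing edges independently obeys the monotone staircase property with probability $1/2$ and computes an expected count of $\frac{1}{p}\sum_{j=1}^{p} j = (p+1)/2$, concluding $O(p)$. Your approach buys rigor and simplicity --- since the lemma only asserts an upper bound, a deterministic cap of $p$ suffices and requires no probabilistic model; it also sidesteps the mismatch in the paper between the stated Bernoulli-$\tfrac12$ assumption (whose expectation would be $p/2$) and the uniform-distribution sum actually computed. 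What the paper's version buys is a (heuristic) statement about the \emph{typical} number of staircases found, $\approx (p+1)/2$, which is the quantity of interest for the randomized search's expected behaviour, in the same spirit as Lemma \ref{lem:r1}. One small point to tighten in your write-up: in Algorithm \ref{alg:mscutbend} the left partition grows cumulatively as neighbors are accepted, so the successive cuts are $L\cup\{v_{j_1}\}$, $L\cup\{v_{j_1},v_{j_2}\},\dots$ rather than the single-vertex augmentations $L\cup\{v_j\}$ you describe; distinctness still follows (the partitions are strictly nested), so the conclusion is unaffected.
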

\begin{proof}
Since, all the $p$ edges emanating from $v_i$ have $1/2$ probability of obeying Lemma \ref{lem:1}, the average case\\
\qquad = $1/p(1 + 2 + ....... + (p-1) + p)$\\
\qquad = $(p+1)/2$\\
Hence, $O(p)$ distinct monotone staircases can be identified.
\end{proof}

\subsection{The Pseudo-code for the proposed randomized bipartitioner}
In this section, we present the pseudo-code for the proposed randomized floorplan bipartitioning method \textit{MSCut\_Bend\_RAND} in Algorithm \ref{alg:mscutrand}, in order to identify a minimal bend monotone staircase in a given floorplan at a given level of bipartition hierarchy. Alike Algorithm \ref{alg:mscutbend}, this algorithm is called at any level of the bipartitioning hierarchy. The bipartition hierarchy is obtained by the same recursive framework presented in Algorithm \ref{alg:msctreebend}.
\begin{algorithm}
\SetAlgoLined
\SetKwData{Left}{left}\SetKwData{This}{this}\SetKwData{Up}{up}
\SetKwFunction{Union}{Union}\SetKwFunction{FindCompress}{FindCompress}
\SetKwInOut{Input}{input}\SetKwInOut{Output}{output}

\Input{$G_b$, $N$, $\gamma$, $\beta$, $baltype$}
\Output{An optimal monotone staircase for a given ($\gamma$, $\beta$) with maximal area balance, minimal net cut and minimal number of bends}
\BlankLine

Define a Queue $Q$, a list $\lambda$ and iterator $r$ = $0$\\
\While{($r < 3$)}{
Initialize left partition $L$ = $\varnothing$ (right partition $R$ = $V_b \setminus L$)\\
Enqueue the source vertex of $G_b$ in $Q$ as (BFS) level $0$ vertex, and include it in $L$ \\
Also enqueue $\varnothing$ as BFS level indicator\\
\While{(NOT\_EMPTY($Q$))}{
Let $v_i$ be the dequeued vertex\\
Define a wavefront $V_{list} = \varnothing$\\
\If{($v_i $ $\neq$ $\varnothing$)}{
$V_{list} \leftarrow \{v_i\}$\\
}
\Else{ 
\While{There exists at least one cut edge in $E_b$ emanating the vertex front $V_{list}$ and terminating on $R$}{
Generate a random seed to choose a cut edge ($v_i,v_j$), such that $v_i \in V_{list}$ and $v_j \in R$\\
\If{($v_i,v_j$) yields a valid ms-cut (see Lemma \ref{lem:1})}{
Enqueue the vertex $v_j$ and include it in $L$\\ 
Mark the edge ($v_i,v_j$) as explored\\
Compute the parameters for the ($L,R$) partition (see Eqn. \ref{eq:2}) and store them in a list $\lambda$
}
}
Increment BFS level\\
Enqueue $\varnothing$ as next BFS level indicator\\
}
}
Increment $r$\\
}
Return optimal monotone staircase with maximum $Gain$ $C_{max} \in \lambda$
\caption{MSCut\_Bend\_RAND}
\label{alg:mscutrand}
\end{algorithm}

\begin{lemma}
\label{lem:r3}
The proposed randomized bipartitioning method MSCut\_Bend\_RAND takes $O(n^2+nk)$ time for obtaining an optimal monotone staircase with minimal bend count on BAG of a given floorplan $F$.
\end{lemma}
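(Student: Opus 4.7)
The plan is to bound the running time of MSCut\_Bend\_RAND by tracking three separately contributing parts: (i) the outer retry loop, (ii) the BFS-style traversal over the BAG, and (iii) the per-cut work needed to evaluate the $Gain$ function (net bipartition and bend counting). Since the outer \texttt{while ($r<3$)} loop has a constant bound, it contributes only a constant multiplicative factor and can be ignored in big-$O$.

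For the BFS traversal itself, I would first invoke the planarity of $G_b$, which (as already used in Lemma \ref{lem:b1}) gives $|E_b|=O(n)$. Each inner iteration dequeues a vertex $v_i$ of out-degree $p_i$ and, by Lemma \ref{lem:r1}, scans its adjacency list in expected $(p_i+1)/2 = O(p_i)$ time; the random indexing changes the order of exploration but not the asymptotic work. Summing over all vertices along any one \emph{START}$\rightsquigarrow$\emph{STOP} path yields $\sum_i O(p_i) = O(|E_b|) = O(n)$ for the BFS skeleton. The validity check for the monotone-staircase property (Lemma \ref{lem:1}) at a candidate cut edge is a local test (no back edge pointing into $L$), so it adds only a constant per edge.

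The dominant cost comes from the ``Compute the parameters for the $(L,R)$ partition'' step inside the inner loop. A single such evaluation involves (a) measuring the area balance $balr$, which is $O(1)$ if areas are maintained incrementally, (b) counting the number of bends in the current staircase, which by Lemma \ref{lem:b1} is $O(n)$ per cut, and (c) determining the cut-net count $k_c$, which requires inspecting the pin sets of the $k$ nets and costs $O(k)$ per cut. Since a single $\emph{START}\rightsquigarrow\emph{STOP}$ traversal yields a sequence of at most $n-1$ valid ms-cuts (the length bound stated in Section \ref{sec:msc-rand}), the total parameter-evaluation cost per traversal is $(n-1)\cdot O(n+k) = O(n^{2}+nk)$. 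Adding the $O(n)$ BFS skeleton cost and multiplying by the constant $r=3$ gives the claimed $O(n^{2}+nk)$ bound.

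The main obstacle I anticipate is justifying that randomization does not blow up this bound. Two sub-points need care: first, that Lemma \ref{lem:r1} applies even though the wavefront $V_{list}$ is being emptied edge-by-edge (so the ``adjacency'' being searched is the set of unexplored cut edges emanating from $V_{list}$, not just from a single $v_i$); and second, that the ``Mark the edge as explored'' bookkeeping ensures every edge of $E_b$ is visited at most once per outer iteration, so the amortized per-edge cost remains constant and the traversal cannot revisit the same edge and inflate the runtime beyond $O(|E_b|) = O(n)$. Once these two points are argued, the remainder of the bound follows directly from the per-cut parameter cost analysis above.
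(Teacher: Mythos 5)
Your proposal is correct and follows essentially the same decomposition as the paper's own proof: a constant factor for the three trials, $O(n)$ for traversing the planar BAG via the sum of out-degrees, and a per-cut cost of $O(n+k)$ (bend counting via Lemma \ref{lem:b1} plus net bipartitioning) over the $O(n)$ cuts, yielding $O(n^2+nk)$. Your explicit attention to the edge-marking bookkeeping and the applicability of Lemma \ref{lem:r1} to the wavefront $V_{list}$ is a welcome tightening of details the paper leaves implicit, but it does not constitute a different argument.
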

\begin{proof}
Since the number of edges $|E_b|$ in $G_b$ is $O(n)$ and $E_b$ = $\sum\limits_{i=1}^{n} \mathrm{p_i}$, where $p_i$ is the out-degree of $v_i$, it takes $O(n)$ time for searching distinct monotone staircases. In this method, we use $3$ trials in order to obtain a different sequence of monotone staircases in each trial, but possibly not disjoint. Also the net partitioning procedure takes $O(k)$, while finding the number of bends account for $O(n)$ time (see Lemma \ref{lem:b1}). Thus, the overall time taken by the proposed bipartitioning method is $O(n(n+k))$, i.e., $O(n^2+nk)$.
\end{proof}

Note that Algorithm \ref{alg:mscutrand} has the same $O(n^2+nk)$ time complexity as Algorithm \ref{alg:mscutbend}, but only a constant times higher due to multiple trials conducted for obtaining different sequences. In order to obtain a set of optimal monotone staircases with minimal bend count for the entire floorplan, the same recursive bipartitioning framework presented in Algorithm \ref{alg:msctreebend} can be used. Therefore, the recursive procedure considering the proposed randomized technique takes $O((n^2+nk)\log n)$ time to generate a hierarchy of monotone staircase cuts for a given floorplan topology.

\section{Experimental Results}
\label{sec:result}
In order to verify the correctness and efficiency the proposed bipartitioning methods, we ran them on MCNC/GSRC floorplanning benchmark circuits \cite{parque} (see Table \ref{tab:bench}). Different floorplan instances of a circuit were generated using \textit{Parquet} floorplacement tool \cite{adya,parque} using random seeds. In order to observe different bipartitioning scenarios for the same circuit, we generated four different floorplan instances for each circuit. The algorithms were implemented in $C$ programming language and run on a Linux platform ($2.8$GHz, $16$GB RAM).  
\begin{table}[!ht]
\footnotesize
\centering
\caption{Floorplanning Benchmarks \cite{parque}}
\begin{tabular}{ |c|c|r|r|r|}
\hline
{Suite} & {Circuit} & {\#Blocks} & {\#Nets}  &  {Avg. Net}\\
& & & & Degree \\  \hline
{MCNC} & apte & 9 & 44 & 3.500\\
\hline
& hp & 11 & 44 & 3.545\\
\hline
& xerox & 10  & 183 & 2.508\\
\hline
& ami33 & 33 & 84 & 4.154\\
\hline
& ami49 & 49 & 377 & 2.337\\ \hline
\hline
{GSRC} & n10 & 10 & 54 & 2.129\\
\hline
& n30 & 30 & 147 & 2.102\\
\hline
& n50 & 50 & 320 & 2.112\\
\hline
& n100 & 100 & 576 & 2.135\\
\hline
& n200 & 200 & 1274 & 2.138\\
\hline
& n300 & 300 & 1632 & 2.161\\ \hline
\end{tabular}
\label{tab:bench}
\end{table}

\subsection{Bipartitioning Results}
In our experimental setup, we ran the proposed monotone staircase bipartitioning methods with minimal bends, BFS (see Algorithm \ref{alg:mscutbend}) and randomized (RAND) version (refer to Algorithm \ref{alg:mscutrand}) that works in breadth-first traversal (BFS) fashion at any node of the bipartition hierarchy (see Algorithm \ref{alg:msctreebend}). For experimental purpose, we also came up with a variant of the BFS based greedy method by adopting depth-first search (DFS) on the BAG. Due to lack of space, we are unable to present its pseudo-code. An example showing the working of these bipartitioning methods (BFS, DFS, RAND) is presented in Appendix \ref{apdx:1}. 

These experiments were conducted with $\gamma \in [0.1$, $0.7]$ and $\beta \in [0.0$,$0.3]$, both varying in steps of $0.1$ such that $\gamma + \beta <= 1$. The corresponding bipartitioning results for BFS, DFS, and RAND methods are presented in Fig. \ref{fig:msc-cost} for: (a) area balance ratio ($balr$), (b) normalized bend count ($z/z_{max}$), (c) normalized net cut ($k/k_c$), and (d) $Gain$ (see Eqn. \ref{eq:2}) respectively. The corresponding values were computed as an average of the respective parameters over the specified ($\gamma, \beta$) pairs and all $4$ instances of a given circuit. We compare these results with an earlier BFS based directed search method \cite{karb} which did not consider bend minimization (BFS-NB). It is also important to note that the results presented in \cite{karb} is for $\gamma$ = $0.4$ only which is similar to the results for $\gamma$ = $0.4$ and $\beta$ = $0.0$ case in BFS mode. Moreover, they did not report the individual objective values in their paper. For fair comparison, we ran their code \cite{karb} for obtaining the results for each of the objectives other than $Gain$ in BFS-NB mode, including runtime.
\begin{figure*}[!ht]
\centering
\begin{subfigure}[b]{0.48\textwidth}
\centering
\includegraphics[scale=0.35]{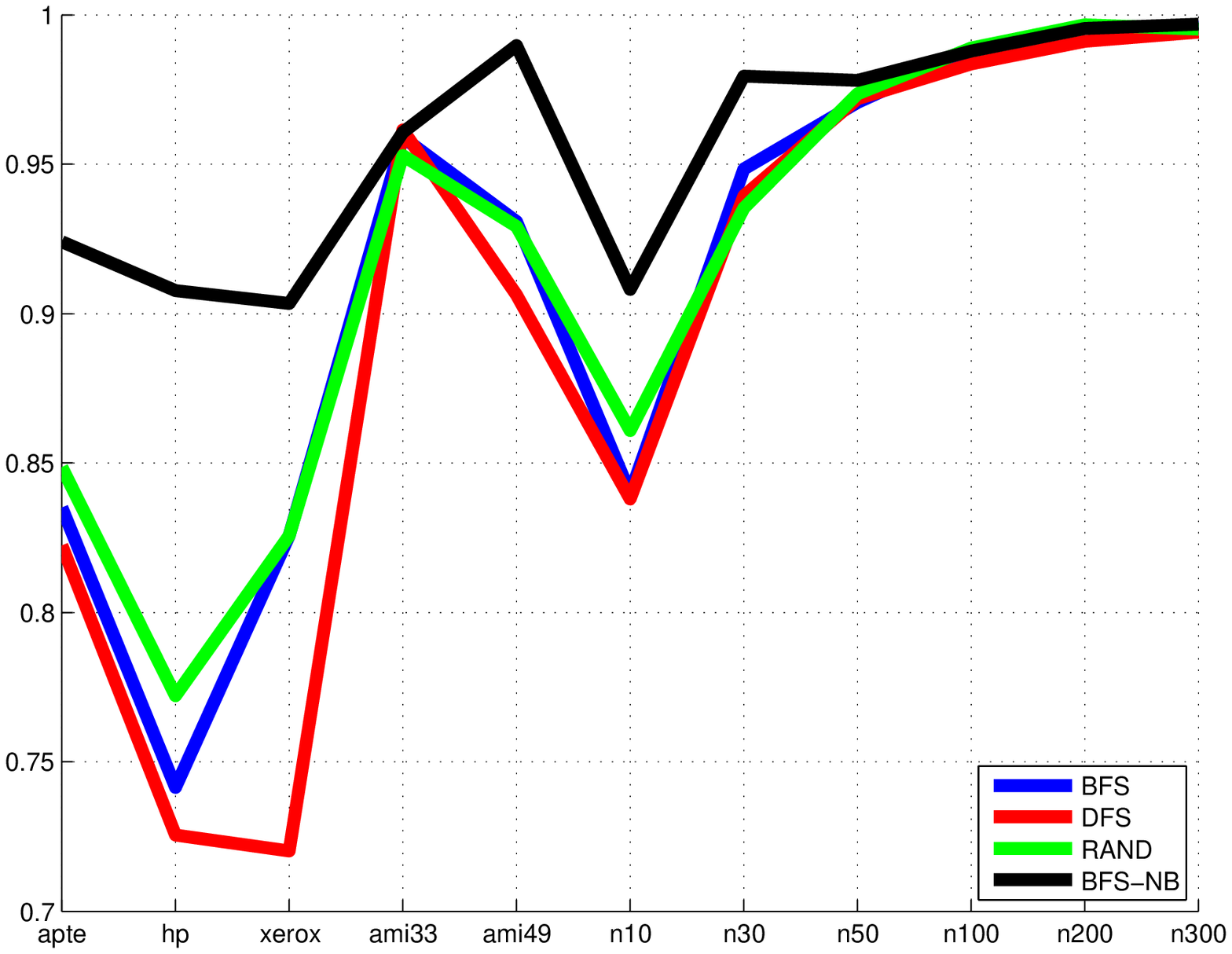}
\caption{Area Balance Ratio ($balr$)}
\end{subfigure}
\begin{subfigure}[b]{0.48\textwidth}
\centering
\includegraphics[scale=0.35]{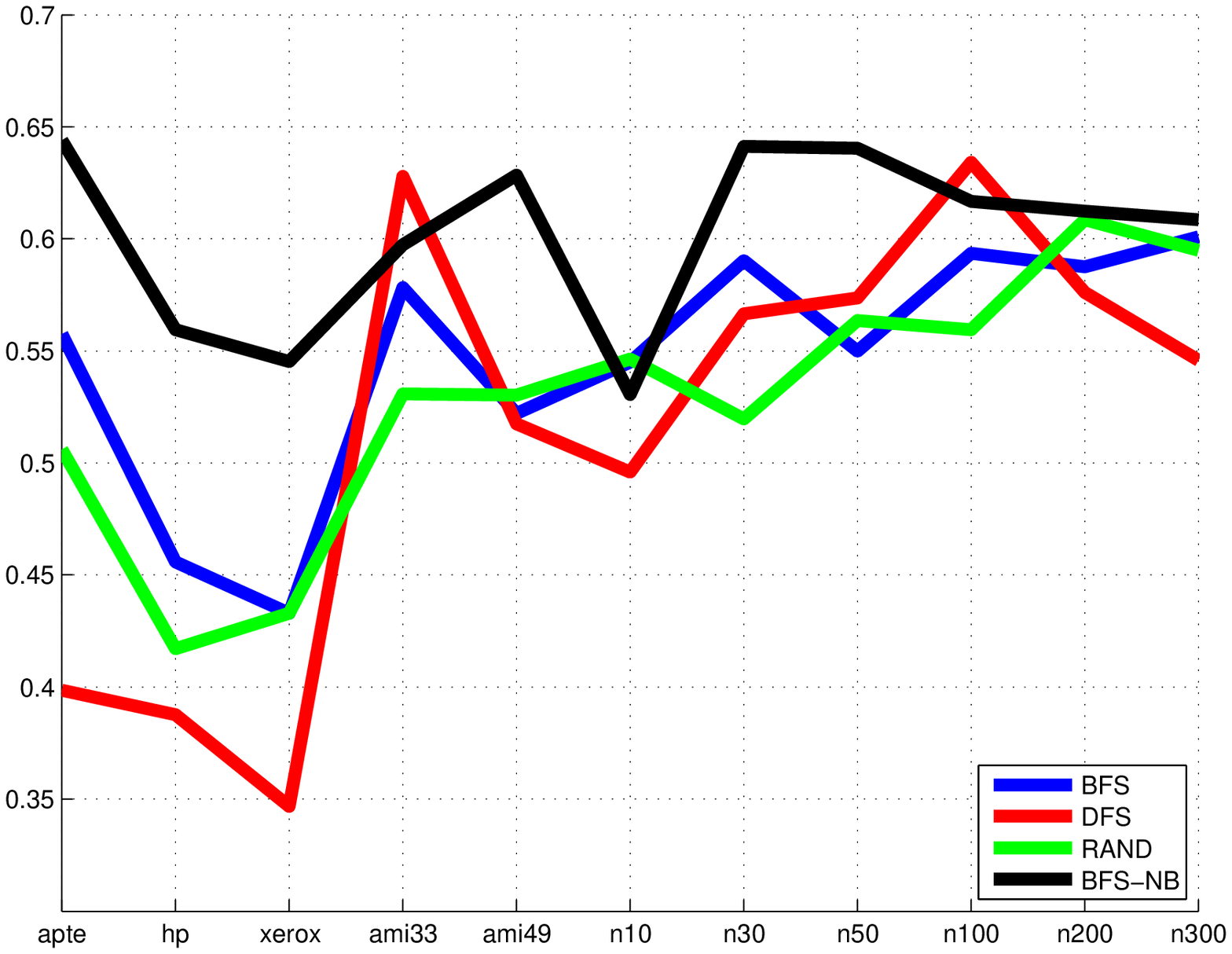}
\caption{Bend Ratio ($z/z_{max}$)}
\end{subfigure}
\begin{subfigure}[b]{0.48\textwidth}
\centering
\includegraphics[scale=0.35]{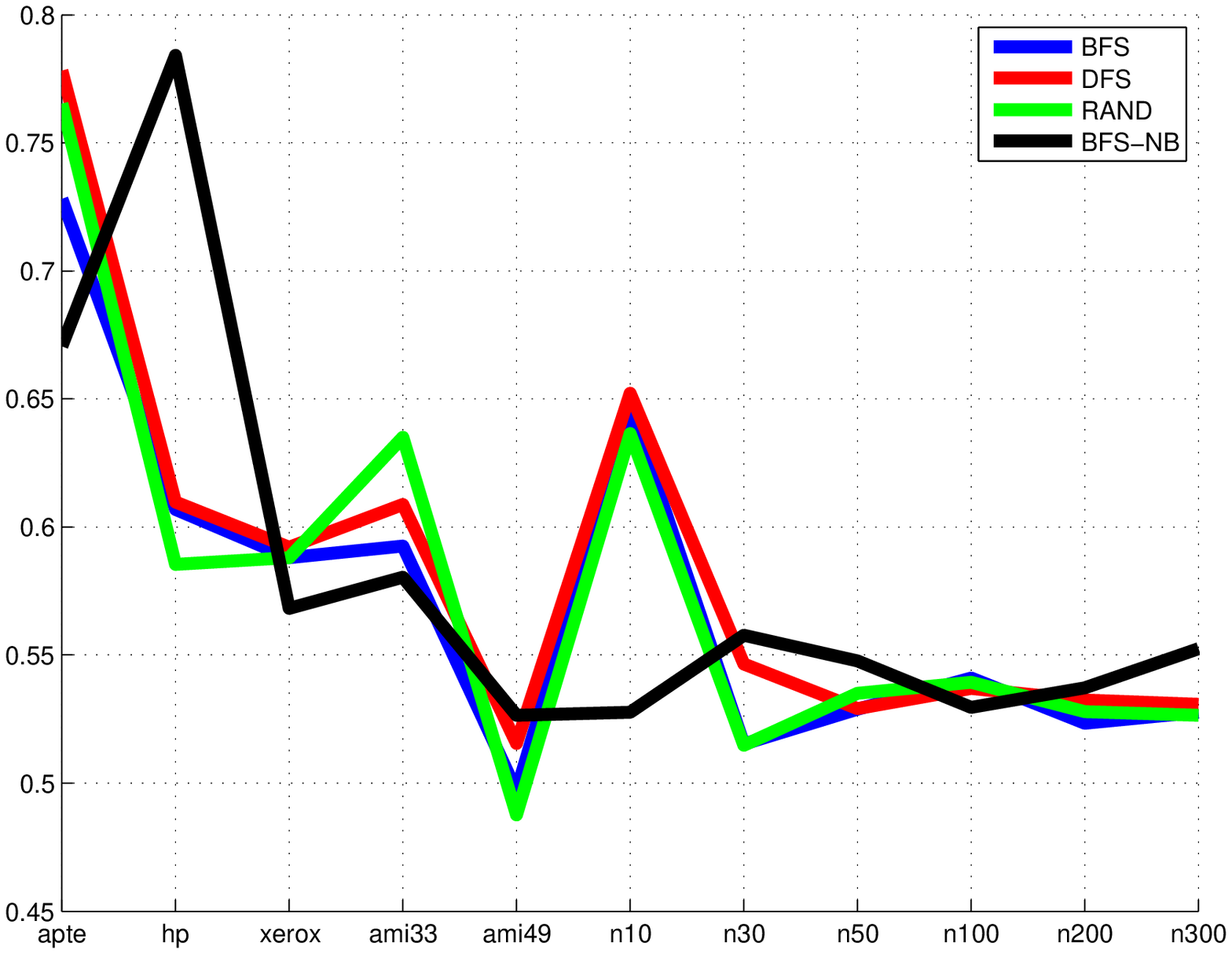}
\caption{Net Cut Ratio ($k_c/k$)}
\end{subfigure}
\begin{subfigure}[b]{0.48\textwidth}
\centering
\includegraphics[scale=0.35]{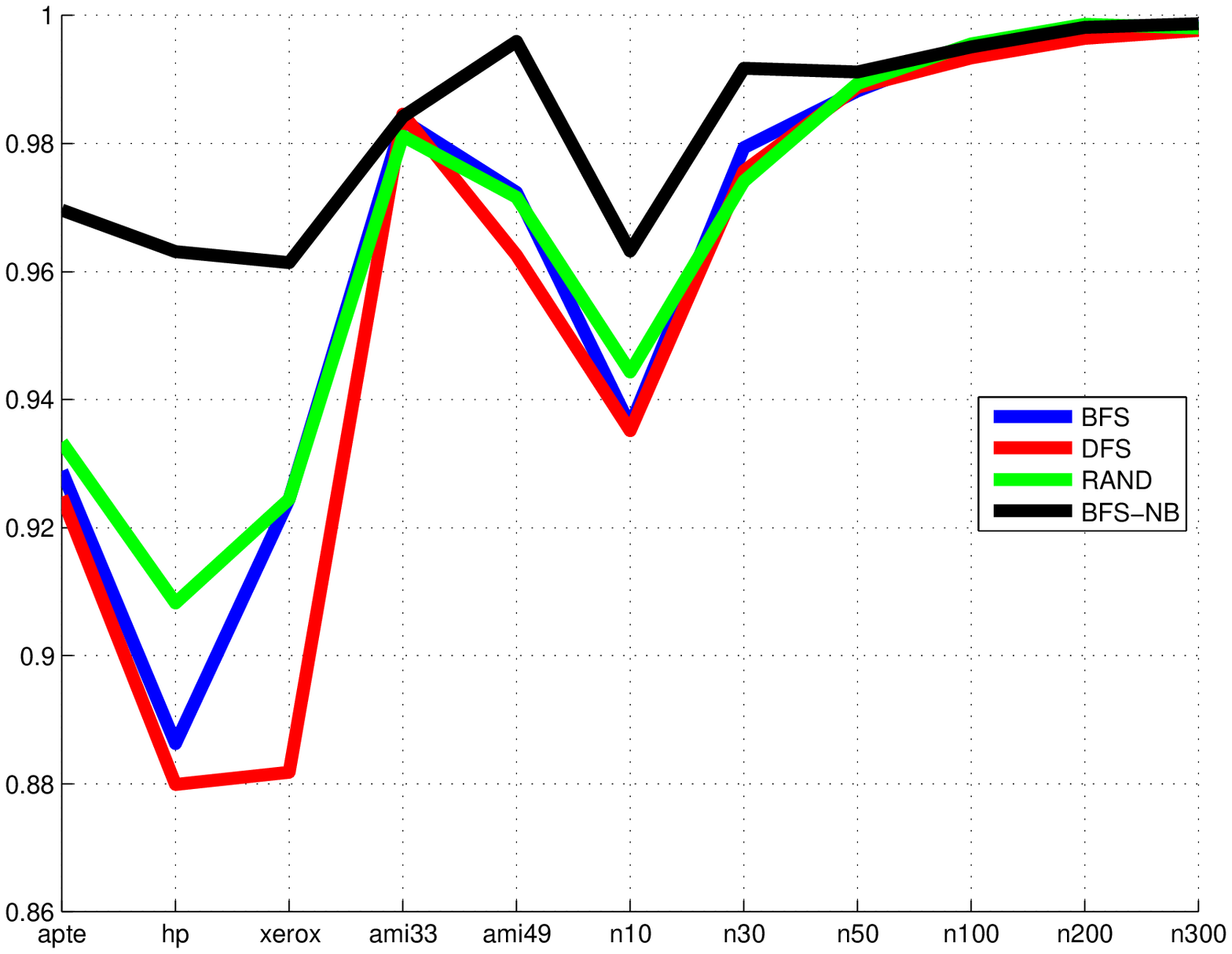}
\caption{Maximum $Gain$}
\end{subfigure}
\caption{Comparison of bipartitioning results: this work (BFS, DFS, RAND) vs BFS-NB (available for $\gamma$ = $0.4$) \cite{karb}}
\label{fig:msc-cost}
\end{figure*}

The results on area balance in Fig. \ref{fig:msc-cost} (a) show that BFS-NB \cite{karb} outperforms all other modes \{BFS,DFS,RAND\} that used bend minimization objective, by focusing on area balance and net cut only. Among the proposed methods, DFS has the worst area balance values for most of the circuits. For net cut, BFS-NB mode performs well only for a few circuits although the net cut objective has more weight of $0.6$ for $\gamma$ = $0.4$. BFS and RAND have better net cut results for most of the circuits. Likewise, DFS mode continues to give higher net cut values for all the circuits. Regarding the number of bends, RAND mode is consistently better for most of the circuits compared to BFS and DFS. Due to certain floorplan topologies in specific circuits, DFS mode had better average values of bend counts for smaller circuits such as $apte$, $hp$, $xerox$, $n10$ with around $10$ blocks and large circuit $n300$. Lastly, BFS-NB consistently yielded the worst (highest) bend counts over other modes. Overall, the $Gain$ values reported for each circuit show that BFS-NB is the best for circuits up to $n50$, followed by RAND mode which dominates the $Gain$ values over BFS and DFS modes for the remaining circuits. For larger circuits like $n50$ and above, RAND mode is seen to supersede BFS-NB with the maximum $Gain$ values. 

Due to balanced bipartitioning at each node of the bipartition hierarchy (MSC tree \cite{karb}), the height of the bipartition (MSC) tree is stated to be $O(\log n)$, where $n$ is the number of blocks in a floorplan. The results presented in Fig. \ref{fig:msc-height} for each circuit shows that the average height of the MSC tree taken over the generated floorplan instances and ($\gamma$, $\beta$) values, is contained within the tight bounds of $\log n$ and $2\log n$, thus establishing the claim in Theorem \ref{theo:b1}.
\begin{figure}[!ht]
\centering
\includegraphics[width=3.2in,height=1.75in]{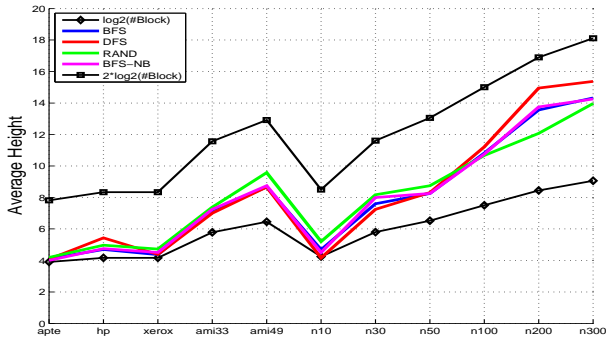}
\caption{Experimental results on the height of MSC tree}
\label{fig:msc-height}
\end{figure}

Table \ref{tab:runtime} presents the runtime results for the proposed recursive floorplan bipartitioners (BFS, RAND and DFS) as well as \cite{karb} (BFS-NB). As stated in Section \ref{sec:msc-rand}, RAND mode is merely a constant times higher than the other two modes and is more prominent with larger circuits such as $n100$, $n200$ and $n300$, while BFS/DFS report similar runtime for all the circuits. But, none of these methods can match the runtime values obtained by the faster method BFS-NB as claimed by \cite{karb} even for the larger circuits.
\begin{table}[!ht]
  \centering
  \caption{Comparison of CPU time ($sec$)}
    \begin{tabular}{|c|r|r|r|r|}
      \hline
      \textbf{Circuit} & \textbf{BFS} & \textbf{DFS} & \textbf{RAND} & \textbf{BFS-NB} \cite{karb} \\
      \hline
      apte & 0.005 & 0.005 & 0.007 & 0.005 \\ \hline
      hp & 0.006 & 0.005 & 0.009 & 0.003 \\ \hline
      xerox & 0.011 & 0.011 & 0.016 & 0.009 \\ \hline
      ami33 & 0.033 & 0.032 & 0.060 & 0.014 \\ \hline
      ami49 & 0.107 & 0.106 & 0.225 & 0.023 \\ \hline
      n10 & 0.005 & 0.004 & 0.008 & 0.008 \\ \hline
      n30 & 0.031 & 0.031 & 0.058 & 0.006 \\ \hline
      n50 & 0.124 & 0.123 & 0.220 & 0.050 \\ \hline
      n100 & 0.803 & 0.800 & 1.369 & 0.062 \\ \hline
      n200 & 7.841 & 7.833 & 12.612 & 0.432 \\ \hline
      n300 & 21.945 & 22.055 & 38.967 & 0.656 \\ \hline
      \hline
      Normalized & & & & \\ 
      Geo Mean & 3.601 & 3.549 & 6.200 & 1.000 \\
      \hline
    \end{tabular}
  \label{tab:runtime}
\end{table}

\subsection{Via Count in Early Global Routing by STAIRoute}
In this section, we present the experimental results on early via estimation by performing early global routing of the corresponding floorplan level netlist using STAIRoute \cite{karb2} and the bipartitioning results presented in earlier subsection for BFS, DFS, and RAND modes. A maximum of $8$ metal layers were used by STAIRoute using preferred routing directions. We present the corresponding results for the largest benchmark circuit $n300$ in Fig. \ref{fig:gr-via-n300-bc} and \ref{fig:gr-via-n300-wc} for $\beta \in \{0.0, 0.1, 0.2, 0.3\}$ and $\gamma \in [0.1, 0.7]$ in steps of $0.1$. This experimental setup does not apply to BFS-NB mode since the corresponding values of ($\gamma, \beta$) is not applicable for it. However, our study confirmed that the via count for BFS-NB mode resembles that with BFS mode for $\gamma$ = $0.4$ and $\beta$ = $0.0$.

We also study the variation of via count for two different floorplan instances of $n300$, the best-case instance with smaller HPWL (Instance\#1) and the worst-case instance with larger HPWL (Instance\#2) in Fig. \ref{fig:gr-via-n300-bc} and \ref{fig:gr-via-n300-wc}. In case of instance\#1, DFS mode dominates over BFS and RAND modes only for $\beta$ = $0.0$. However, $\beta > 0$ cases show that RAND mode dominates DFS for upto some $\gamma$ values, such as $0.3$, $0.5$ and $0.4$ respectively, for the respective $\beta \in \{0.1, 0.2, 0.3\}$. Beyond these $\gamma$ values, DFS yields the best via count for this floorplan instance of $n300$. In a very small range of $\gamma$ and $\beta$ values, i.e., $\beta$ = $0.3$ and $0.4 < \gamma \leq 0.5$, BFS appears to dominate over DFS and RAND modes. 

\begin{figure}[!ht]
\centering
\begin{subfigure}[b]{0.48\textwidth}
\centering
\includegraphics[scale=0.35]{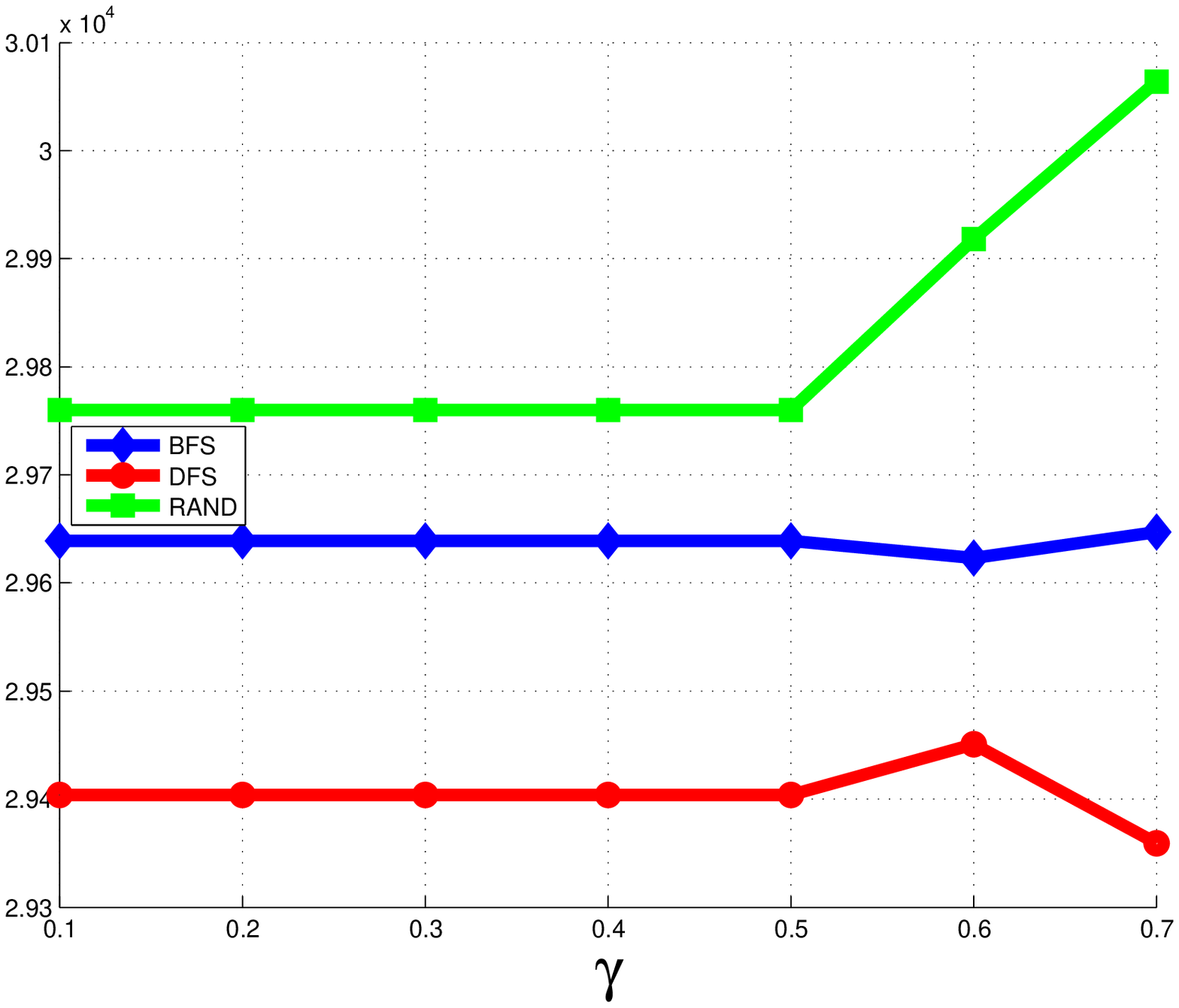}
\caption{$\beta = 0.0$}
\end{subfigure}
\begin{subfigure}[b]{0.48\textwidth}
\centering
\includegraphics[scale=0.35]{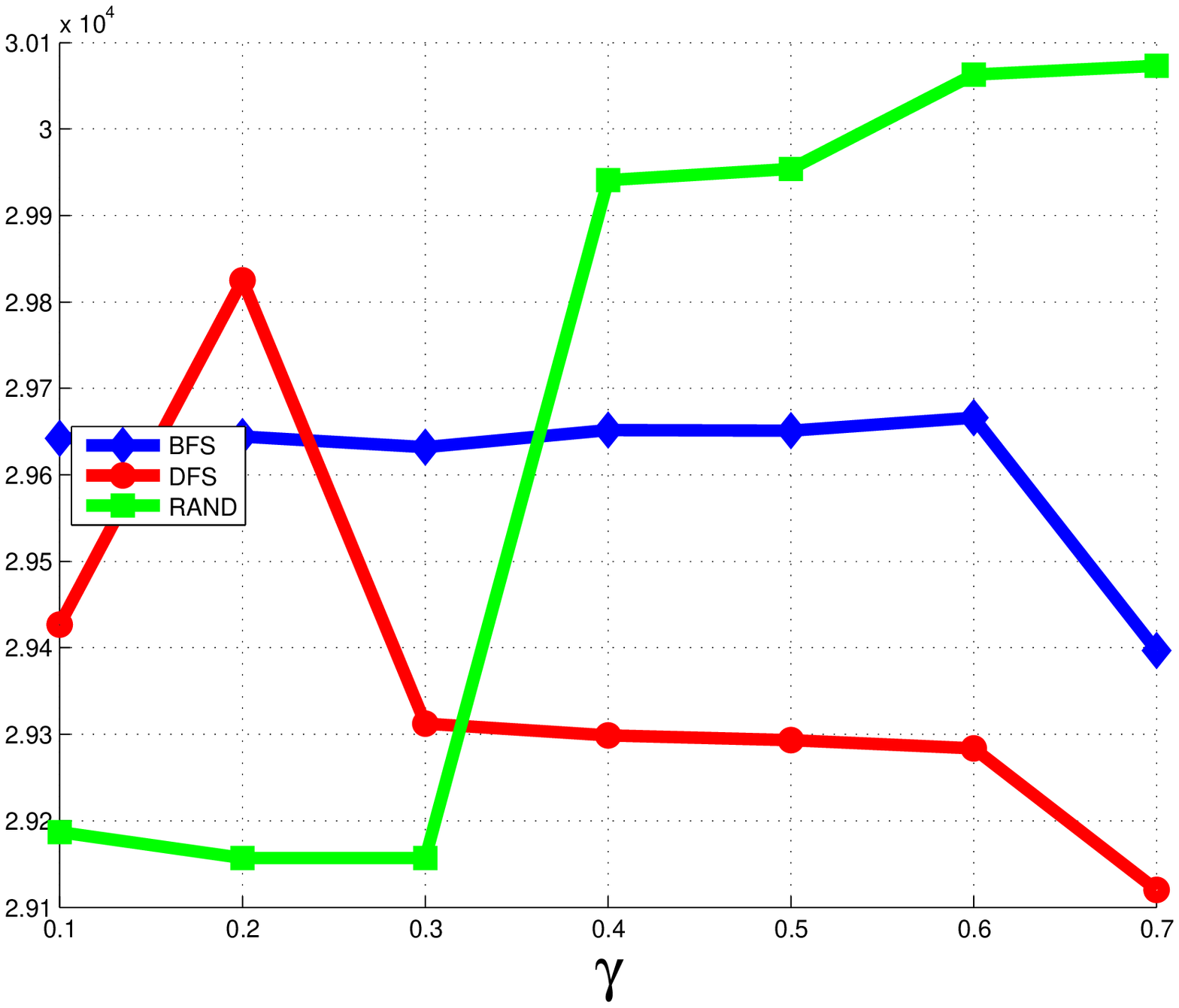}
\caption{$\beta = 0.1$}
\end{subfigure}
\begin{subfigure}[b]{0.48\textwidth}
\centering
\includegraphics[scale=0.35]{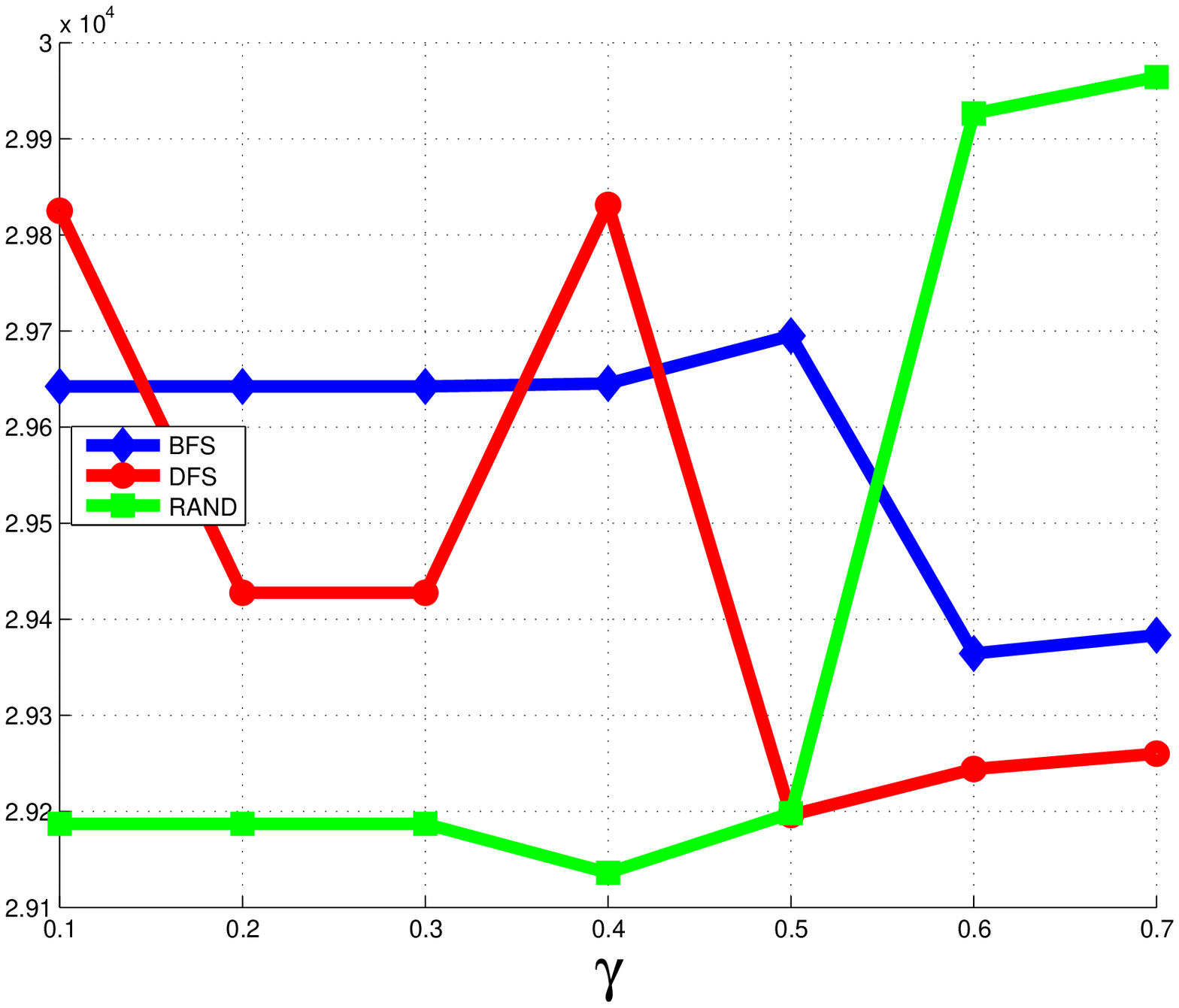}
\caption{$\beta = 0.2$}
\end{subfigure}
\begin{subfigure}[b]{0.48\textwidth}
\centering
\includegraphics[scale=0.35]{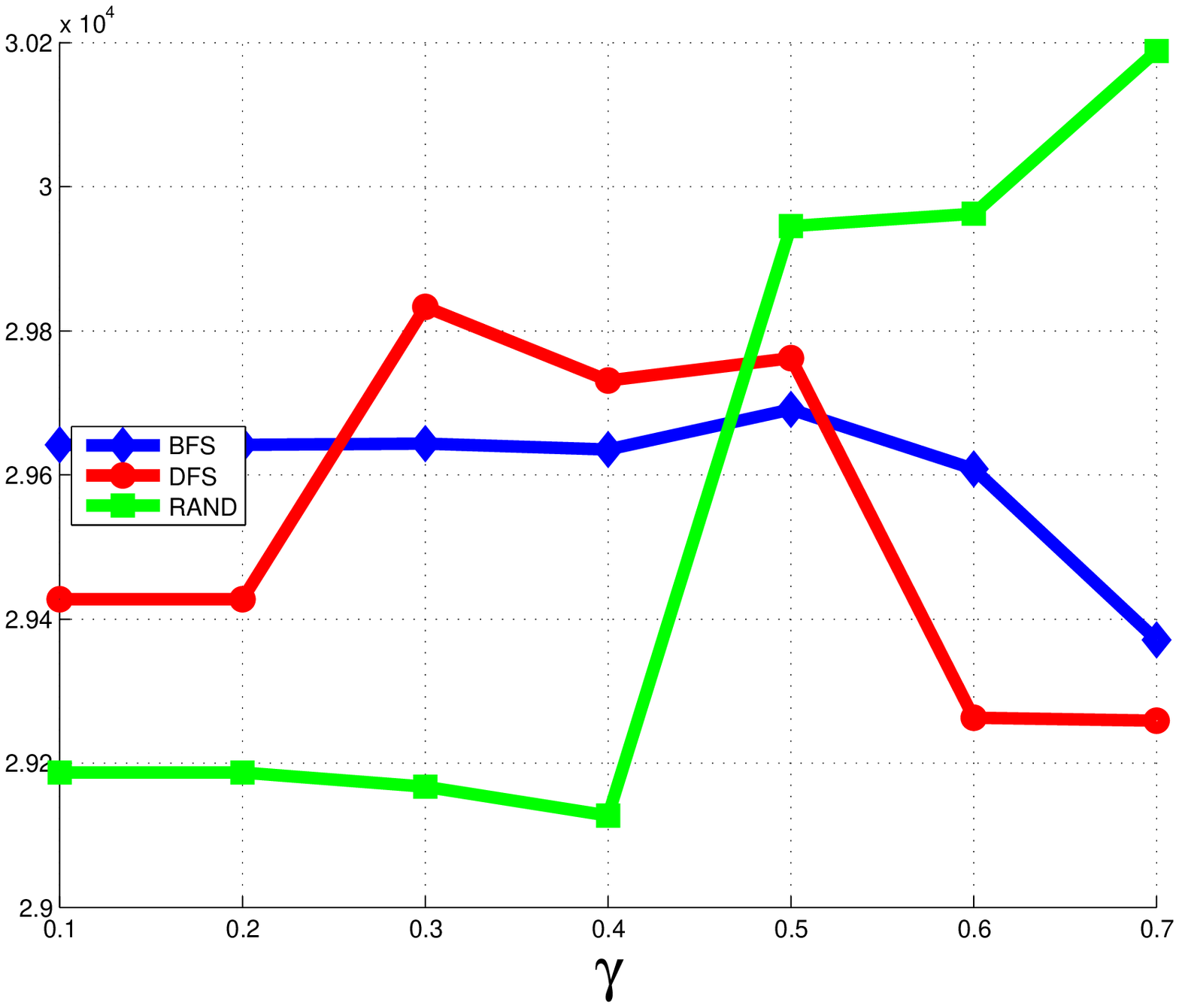}
\caption{$\beta = 0.3$}
\end{subfigure}
\caption{Via count vs. $\gamma$ for $n300$ and $\beta$ values: for Instance$\#1$}
\label{fig:gr-via-n300-bc}
\end{figure}

For the worst case instance, RAND gives smallest via count as compared to other modes for $\beta$ = $0.0$ and $\gamma \geq 0.4$ for $\beta > 0.0$. As $\beta$ increases, BFS dominates in lower values of $\gamma$, while RAND dominates for the remaining $\gamma$ values with fewer via counts. For all $\gamma$ values and the respective $\beta$ values, via count due to DFS mode is almost constant, with some variations near $\gamma$ value of $0.6$ and $0.7$.
\begin{figure}[!ht]
\centering
\begin{subfigure}[b]{0.48\textwidth}
\centering
\includegraphics[scale=0.35]{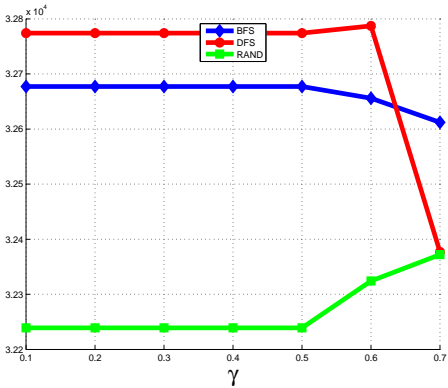}
\caption{$\beta = 0.0$}
\end{subfigure}
\begin{subfigure}[b]{0.48\textwidth}
\centering
\includegraphics[scale=0.35]{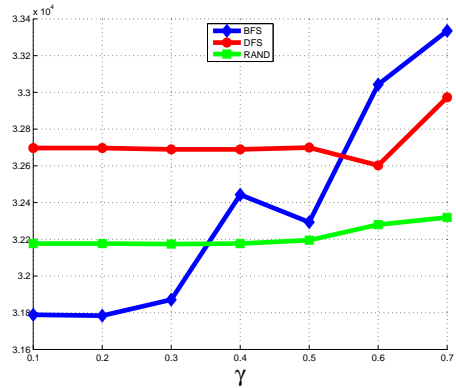}
\caption{$\beta = 0.1$}
\end{subfigure}
\begin{subfigure}[b]{0.48\textwidth}
\centering
\includegraphics[scale=0.35]{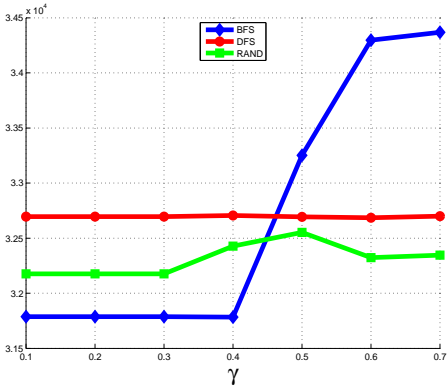}
\caption{$\beta = 0.2$}
\end{subfigure}
\begin{subfigure}[b]{0.48\textwidth}
\centering
\includegraphics[scale=0.35]{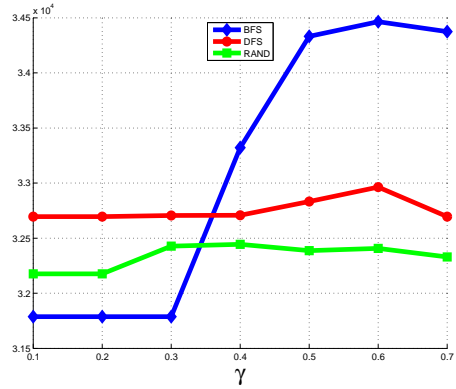}
\caption{$\beta = 0.3$}
\end{subfigure}
\caption{Via count vs. $\gamma$ for $n300$ and $\beta$ values: for Instance$\#2$}
\label{fig:gr-via-n300-wc}
\end{figure}

The experiments on all benchmark circuits for different floorplan instances showed that there was no significant variation in routed netlength obtained for BFS, DFS and RAND modes, but are better than that obtained in BFS-NB mode. Due to lack of space, we are not able to put the relevant details obtained by STAIRoute. These netlength values as normalized with resepect to no-blockage aware steiner length (computed by FLUTE \cite{cchu}) ratio and their geometric mean values were obtained as $1.207$, $1.201$ and $1.208$ for BFS, DFS and RAND modes respectively, while BFS-NB mode yields a value of $1.287$. Using the approach in \cite{weiy}, the average worst case congestion, defined as the ratio of routing demand and routing capacity, for different floorplan instances of all the circuits in all the modes and for all ($\gamma, \beta$) pairs, remained $~85\%$ ensuring $100\%$ routability, using up to $8$ metal layers as per the congestion model proposed in STAIRoute \cite{karb2}. However, the maximum average congestion \cite{weiy} in any of the floorplan instances for any mode and ($\gamma, \beta$) values was seen to be $~99\%$. This shows that no monotone staircase routing region had a congestion over $100\%$ in any routing layer as claimed by \cite{karb2}.

\section{Conclusion}
\label{sec:con}
In this paper, we proposed an early version of unconstrained via minimization in floorplan based early global routing, by a new recursive floorplan bipartitioning framework. This bipartitioning framework identifies, for a given floorplan topology, a set of monotone staircase routing regions with minimal number of bends, by: (a) a greedy method employing BFS/DFS based graph search techniques, and (b) a randomized neighbor search technique for staircase wavefront propagation on BAG of the given floorplan. In this work, we first introduce the bend minimization objective in the multi-objective floorplan bipartitioning problem using monotone staircase cuts and used a pair of trade-off parameters ($\gamma, \beta$). The solution of this optimization yields a minimal bend monotone staircase routing which impacts the via count during floorplan based early global routing. 

Experimental results show the impact of the results of the proposed minimal bend monotone staircase bipartitioning methods on via count during early global routing for varying ($\gamma, \beta$) pairs and yield fewer via counts. This framework can potentially assess the quality of the floorplan in terms of these via counts. 



\section{Appendix}

\subsection{Staircase Wave-front Propagation in a Floorplan}
\label{apdx:1}
We consider an example of monotone staircase wave-front propagation in a floorplan instance for $9$ blocks, as depicted in Fig. \ref{fig:msc-wave-apdx}.  In this example, we study how different monotone staircase cuts on BAG can sequentially be obtained by the proposed DFS, BFS and randomized bipartitioning methods (RAND). This helps in exploring different sequences of monotone staircases with increased solution space for identifying an optimal monotone staircase for a given ($\gamma, \beta$) pair.
\begin{figure*}[!ht] 
\centering
\includegraphics[scale=0.22]{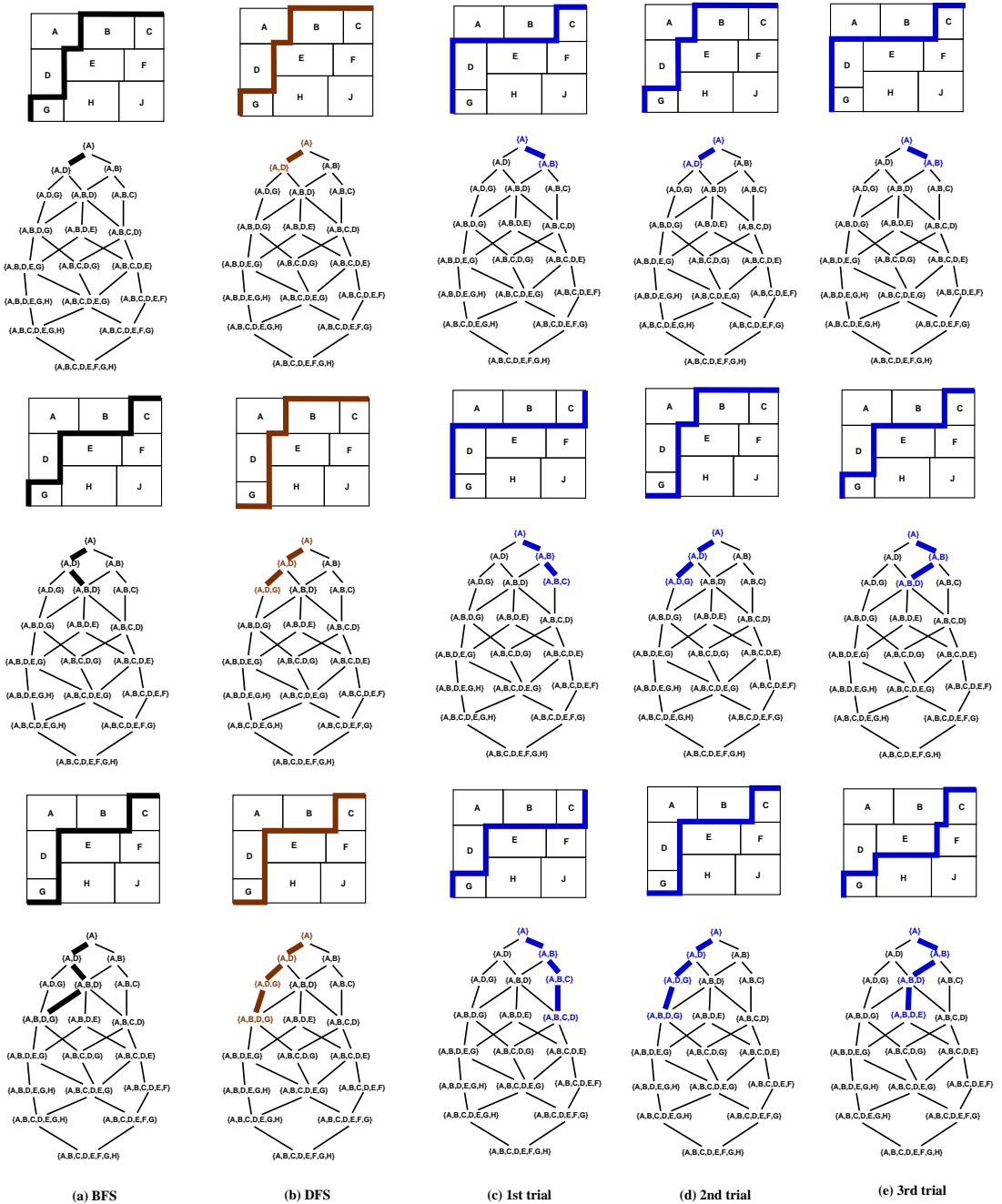}
\caption{Illustrating initial trail of sequences of monotone staircase wave-fronts: (a) BFS, (b) DFS, and (c) 1st, (d) 2nd, and (e) 3rd trial of RAND}
\label{fig:msc-wave-apdx}
\end{figure*}

Due to space limitation, only first few steps for identifying a sequence of monotone staircases obtained by BFS/DFS based bipartitioning are illustrated in Fig. \ref{fig:msc-wave-apdx} (a) and (b). It shows that both the methods greedily search the neighborhood of a vertex (block) in the BAG $G_b$ (see Fig. \ref{fig:neigh} (a)) for propagating the respective wave-fronts. Fig. \ref{fig:msc-wave-apdx} (c)-(e), illustrates three different trials of Algorithm \ref{alg:mscutrand} employing the proposed randomized neighbor search (see Fig. \ref{fig:neigh} (b)). The trials in RAND yield different wave-front propagation instances, as monotone staircase cuts on the BAG. It is important to note that BFS/DFS explores a fixed sequence of $9$ distinct staircases (see Lemma in \cite{karb}) for the same ($\gamma$, $\beta$) value, irrespective of the number of trials. On the other hand, RAND yields different sequences during different trials, by the proposed random neighbor indexing of the vertices. It is not necessary for the sequences to be fully disjoint as evident from Fig. \ref{fig:msc-wave-apdx} (c)-(e). Despite that, an increased solution space of different monotone staircases (a union of all of them obtained during different trials in RAND mode) facilitates us to identify an optimal monotone staircase with minimal number of bends for a given ($\gamma, \beta$), implied by the maximum $Gain$ value.

\subsection{Potential cross-talk minimization}
\label{apdx:2}
 
 This part discusses potential cross-talk minimization by minimizing the number of cut nets at any level of the bipartition hierarchy, MSC tree, by suitably choosing ($\gamma, \beta$) pair, as illustrated in Fig. \ref{fig:msc-cross-apdx}. In this example, we consider two instances of monotone staircases: (a) with more bends and net cut, and (b) with less bend and net cut, as depicted in Fig. \ref{fig:msc-cross-apdx} (a) and (b) respectively. In the former case, we see that two nets $a$ and $b$ are routed through the same monotone staircase routing region (MIS here) using same metal layer and therefore may results in signal cross talk among themselves. The latter case, however, shows that two different staircases are used to route nets $a$ and $b$; although net $b$ partly uses the same staircase (MIS), rest of its routing is done through a different staircase (MDS here). Therefore, both $a$ and $b$ will have minimal scope of signal interference between them.
 
 \begin{figure}[!ht] 
 \centering
 \includegraphics[scale=0.45]{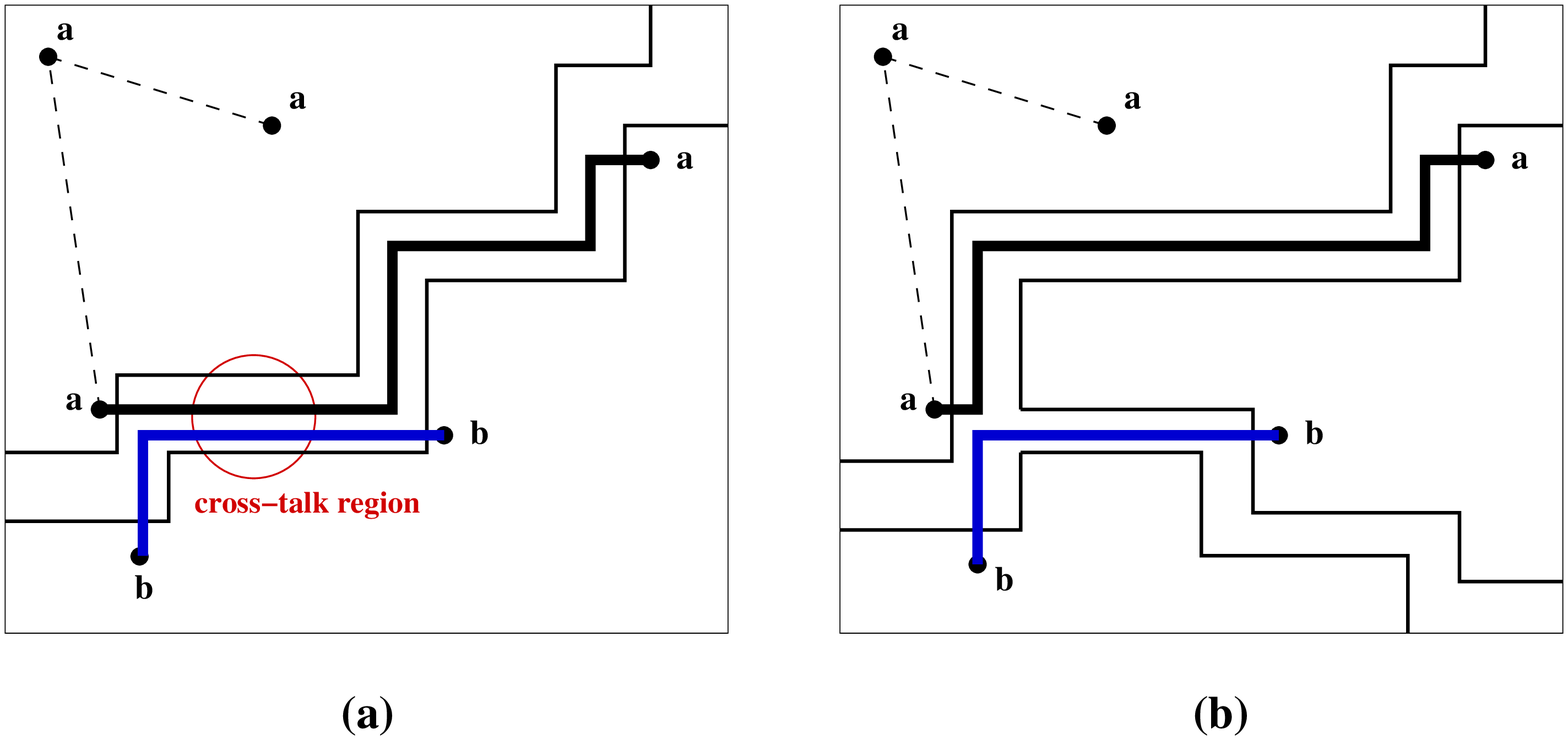}
 \caption{Scope of cross-talk between a pair of nets $a$ and $b$ with respect to minimal bends and cut nets in a monotone staircase with: (a) with more bends and cut nets,  and (b) less bends and cut nets}
 \label{fig:msc-cross-apdx}
 \end{figure}

\renewcommand\bibname{References}


\end{document}